\newtheorem{theorem}{Theorem}
\newtheorem{lemma}{Lemma}
\newtheorem{example}{Example}
\title{Adaptive Approximate Implicitization of Planar Parametric Curves via Weak Gradient Constraints
}
\author{
  Minghao Guo \\
  School of Mathematics and Statistics \\
  Changchun University of Technology \\
  Changchun\\
  \texttt{mhguo@ccut.edu.cn} \\
   \And
  Yan Gao \\
  School of Artificial Intelligence \\
  Jilin University \\
  Changchun\\
  \AND
  Zheng Pan \\
  Chang Guang Satellite Technology Company Ltd. \\
  Changchun \\
}
\begin{document}
\maketitle

\begin{abstract}
Converting a parametric curve into the implicit form, which is called implicitization, 
has always been a popular but challenging problem
in geometric modeling and related applications.
However, the existing methods mostly suffer from the problems
of maintaining geometric features and choosing a reasonable 
implicit degree. The present paper has two
contributions. We first introduce a new regularization constraint
(called the weak gradient constraint) for 
both polynomial and non-polynomial curves, which efficiently possesses shape preserving. 
We then propose two adaptive algorithms of approximate implicitization for 
polynomial and non-polynomial curves respectively, which find the ``optimal'' implicit degree 
based on the behavior of the weak gradient constraint.
More precisely, the idea is 
gradually increasing the implicit degree, until there is no obvious improvement 
in the weak gradient loss of the outputs.
Experimental results have shown the effectiveness and high quality of our proposed methods.
\end{abstract}

\keywords{Approximate implicitization 
\and Parametric curves
\and Curve fitting \and Weak gradient constraint}

\section{Introduction}
In geometric modeling and computer aided design, 
the implicit and the parametric form are two main 
representations of curves and surfaces.
The parametric representations provide a simple way of 
generating points for displaying curves and surfaces.
However, it has to introduce a parametrization of the geometry, 
which is always a challenging problem.
Without requiring any parametrization, 
the implicit representations offer a number of advantages,
such as the closeness under certain geometric operations (like union, intersection, and blending),
the representation ability for describing an object with complicated geometry, and so on.
Thus in this paper, we discuss the problem of
converting the parametric form of curves
into the implicit form, which is called implicitization.

Implicitization has been receiving
increased attention in the past few years. 
Traditional implicitization approaches are based on the elimination theory 
(such as $\mu$-basis\cite{perez2021inversion}, Gr\"{o}bner bases\cite{tran2004efficient}\cite{anwar2021determining}, resultants\cite{perez2008univariate} and moving curves and surfaces\cite{buse2019implicitizing}\cite{lai2019implicitizing}),
in which the implicitization problem is solved by elimination 
of the parametric variables. However, high polynomial degrees of
their outputs not only make this form computationally
expensive and numerical unstable, but also cause
self-intersections and unwanted branches in most cases.

To alleviate this problem, a number of approximate
implicitization techniques have been proposed, 
and we follow this line of work.
Most of these methods fix the degree of the objective
implicit form, and thus the implicitization problem
converts to find a solution in a finite vector space. 
We group these methods into two categories:

Methods in the first category minimize the algebraic distance
from the input parametric curve/surface to the output implicit curve/surface, 
with a chosen implicit degree. One of the first methods 
that use this idea is \cite{dokken2001approximate}, in which
the main approximation tool is the singular value decomposition.
Later, \cite{barrowclough2012approximate} discussed theoretical 
and practical aspects of the Dokken's method in \cite{dokken2001approximate} 
under different polynomial basis functions,
and proposed a new method for a least squares approach to approximate implicitization 
using orthogonal polynomials, see Section \ref{dokkenmethod} for details.
Furthermore, piecewise approximate implicitization with prescribed interpolating conditions 
using tensor-product B-splines was studied by \cite{Raffo2019}.
Recently, \cite{raffo2020reverse} proposed a method to determine primitive shapes of geometric models
by combining clustering analysis with the Dokken's method. In that paper,
the implicit degree of curve/surface patches was determined by checking whether the smallest 
singular value is less than a certain threshold, where the threshold was
inferred using a straightforward statistical approach.

The second category is the fitting-based methods,
which is also called discrete approximate implicitization.
These methods minimize the squared algebraic distances of a set of points 
sampled from the given parametric curves and surfaces.
\cite{wang2007approximate} proposed an approach, that
is to approximate the set
of sampling points with MQ quasi-interpolation in order to possess shape preserving and then
to approximate the error function by using RBF networks.
\cite{wang2021encoder} developed an autoencoder-based fitting method, 
which is to put sampling points into an encoder to obtain polynomial coefficients 
and then put them into a decoder to output the predicted function value.
\cite{juttler2000least} described an algorithm for approximating sampling points 
and associated normal vectors simultaneously, in the context of fitting 
with implicit defined algebraic spline curves. In that paper,
the coefficients of the output function $f$ are obtained as the minimum of
\begin{equation*}
    \sum_{j=1}^{N} (f(\mathbf{p}_j))^2+\lambda
    \sum_{j=1}^{N} 
    \left \| \bigtriangledown f(\mathbf{p}_j)- \mathbf{n}_j \right \|^2
    +\mbox{``tension''},
\end{equation*}
where $\left \{ \mathbf{p}_j \right \} _{j=1}^N$ are sampled point data, 
$\left \{ \mathbf{n}_j \right \} _{j=1}^N$ are unit normals at these points,
and $\lambda$ is the regulator gain.
The first term represents the (algebraic) distance of the point data from the implicit curve
$f=0$. The second term, which is called the \textbf{strong gradient constraint} by us in this paper,
controls the influence of the normal vectors $\mathbf{n}_j$ to the resulting curve.
However, the strong gradient constraint is too strict to find output curves of low degrees,
since it requires that the gradient vector
$\bigtriangledown f(\mathbf{p}_j)$ and the normal vector $\mathbf{n}_j$
have same direction and magnitude simultaneously in any point $\mathbf{p}_j$.
The third term $\mbox{``tension''}$ is added in order to pull the approximating curve towards a simpler shape.
Afterwards, the idea of \cite{juttler2000least} has been generalized to the algebraic spline surface case in 
\cite{B2002Least}, the rotational surface case in \cite{shalaby2008approximate},
and the space curve case in \cite{aigner2012approximate}.

A related approach has been proposed by \cite{interian2017curve}, where an alternative gradient constraint 
\begin{equation*}
    1-\frac{1}{N}\sum_{j=1}^{N} 
    \frac{\bigtriangledown f(\mathbf{p}_j)}{\left \| \bigtriangledown f(\mathbf{p}_j) \right \| } 
    \cdot \mathbf{n}_j
\end{equation*}
is defined to make the normalized gradient vector
$\tfrac{\bigtriangledown f(\mathbf{p}_j)}{\left \| \bigtriangledown f(\mathbf{p}_j) \right \| }$
close to the unit normal vector $\mathbf{n}_j$ in any point $\mathbf{p}_j$.
More precisely, this gradient constraint requires that the mean of 
angles of $\tfrac{\bigtriangledown f(\mathbf{p}_j)}{\left \| \bigtriangledown f(\mathbf{p}_j) \right \| }$
and $\mathbf{n}_j$ equals to zero.
This constraint is used subsequently in their algorithm 
to find the ``optimal'' degree of the implicit polynomial needed for the representation of the data set.
However, the normalization of both gradient and normal vectors leads to multimodal functional 
dependencies of the solution from data, and computationally expensive metaheuristic algorithms have to be
used in order to realize a better exploration of the search space.

In this paper, we attempt to recover the approximate implicitization
of the planar parametric curve adaptively.
The contributions of our work are summarized as follows:
\begin{itemize}
    \item To tackle challenges such as unwanted branches and computational complexity,
    we introduce the so-called \textbf{weak gradient constraint}, 
    which bends the direction of the implicit curve closer to that of the parametric curve.
    Moreover, compared to the strong gradient constraint in \cite{juttler2000least}, 
    our new regularization constraint largens the solution space.
    \item We perform our objective function of approximate implicitization into the
    quadratic form, so that the eigenvalue/eigenvector method can be used to find the minimum rapidly.
    \item We develop an adaptive implicitization algorithm, 
    which is to find an implicit polynomial that produces a compact and smooth representation 
    of the input curve with the lowest degree as possible, 
    and at the same time minimizes the implicitization error.
\end{itemize}




The remainder of the paper is organized as follows.
We state the problem and present a synopsis of the Dokken's method 
for approximate implicitization in Section \ref{background}. Section \ref{polynomial method} introduces our implicitization 
method (WGM) for polynomial curves and shows our numerical results. In Section \ref{non polynomial method}, the WGM is extended to non-polynomial curves. Section \ref{conclusion} finalizes the paper with a conclusion and some possible directions for future work.

\section{Background}\label{background}
\subsection{Problem formulation}
A parametric polynomial curve of degree $m$ in $\mathbb{R}^2$ is given by
\begin{equation*}
    \mathbf{p}(t)=\left ( \begin{matrix}
 p_1(t)\\
p_2(t)
\end{matrix} \right ), t\in [a,b],  
\end{equation*}
where $p_1$ and $p_2$ are polynomials in $t$.
An implicit (algebraic) curve of degree $n$ in $\mathbb{R}^2$, is defined by the
zero contour of a bivariate polynomial
\begin{align*}
    f_{\mathbf{b}}(x,y) & =
    \sum_{i=1}^kb_{i}\phi_i(x,y)\\
& =\left ( \phi_1(x,y), \phi_2(x,y), \ldots ,
\phi_k(x,y)\right )
\left ( \begin{matrix}b_1
 \\b_2
 \\\vdots
 \\b_k
\end{matrix} \right ),
\end{align*}
where $\left \{ \phi_i(x,y) \right \}_{i=1}^{k}$
generates a basis for bivariate polynomials of total degree
$n$, $k=\left ( \begin{matrix}n+2
 \\2
\end{matrix} \right )$ denotes the number of basis
functions,
$\mathbf{b}=\left ( b_1,b_2,\ldots ,b_k \right )^\top $ 
is the coefficient vector of $f_{\mathbf{b} }(x,y)$.

An exact implicitization of 
$\mathbf{p}(t)$ is a non-zero 
$f_{\mathbf{b}}(x,y)$,
such that
the squared algebraic distance (AD for short) from $\mathbf{p}(t)$ to 
the implicit curve $f_{\mathbf{b}}(x,y)=0$ equals to zero, i.e.
\begin{equation*}
    L_{AD}=
    \int_{a}^{b} \left [ f_{\mathbf{b}}(\mathbf{p}(t)) \right ]^2 \mathrm{d} t =0.
\end{equation*}
However, as stated in the introduction, 
in practice one may prefer 
to work with lower degrees.
Thus in this paper, we consider the approximate
implicitization problem, which is to seek the ``optimal''
$f_{\mathbf{b}}(x,y)$ with a lower degree $n$, 
that minimizes the squared AD constraint $L_{AD}$
under some additional criterion to be specified.

\subsection{Dokken's (weak) method}\label{dokkenmethod}




In this subsection, we give a brief description of the Dokken's (weak) method.
Notice that the expression $f_{\mathbf{b}}(\mathbf{p}(t))$ is a univariate
polynomial of degree $mn$ in $t$, 
Dokken finds that $f_{\mathbf{b}}(\mathbf{p}(t))$  can be factorized as
\begin{equation*}
    f_{\mathbf{b}}(\mathbf{p}(t))=
    (\alpha(t))^\top D_1\mathbf{b},
\end{equation*}
where 
\begin{itemize}
    \item $\mathbf{b}$ is the unknown coefficient vector of $f_{\mathbf{b} }(x,y)$,
    \item $\alpha(t)=\left ( \alpha _1(t),\alpha _2(t),\ldots ,\alpha _{mn+1}(t) \right )^\top$
    is the basis of the space of univariate polynomials of degree $mn$, and
    \item $D_1$ is the collocation matrix whose columns are the coefficients 
of $\phi _i(\mathbf{p}(t))$ expressed in the $\alpha(t)$-basis.
\end{itemize}
\begin{lemma}\label{lemma_dokken}\cite{barrowclough2012approximate}
Let \begin{equation}\label{gram}
     G_{\alpha}=
     \int_{a}^{b} \alpha(t)(\alpha(t))^\top \mathrm{d} t
\end{equation}
denote the Gram matrix of the basis $\alpha(t)$.
Then the squared AD of $\mathbf{p}(t)$ 
from $f_{\mathbf{b} }(x,y)=0$ can be written as
\begin{equation}\label{l_ad}
L_{AD}=
\int_{a}^{b} \left [ f_{\mathbf{b}}(\mathbf{p}(t)) \right ]^2 \mathrm{d} t=\mathbf{b}^\top A_1 \mathbf{b},
\end{equation}
where 
\begin{equation}
A_1 = D_1^\top G_{\alpha}D_1
\end{equation}
is a positive semidefinite matrix.
\end{lemma}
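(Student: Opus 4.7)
The plan is to derive the quadratic form directly from the factorization $f_{\mathbf{b}}(\mathbf{p}(t)) = (\alpha(t))^\top D_1 \mathbf{b}$ already established above, and then read off positive semidefiniteness from the resulting expression. There is no deep step to overcome; the work is essentially linear-algebraic bookkeeping, and the main care is in tracking transposes and in justifying why the matrix is only semidefinite rather than definite.

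First I would square the factorization. Since $(\alpha(t))^\top D_1 \mathbf{b}$ is a scalar, its square equals its product with its own transpose, so
\begin{equation*}
[f_{\mathbf{b}}(\mathbf{p}(t))]^2 = \bigl((\alpha(t))^\top D_1 \mathbf{b}\bigr)^\top \bigl((\alpha(t))^\top D_1 \mathbf{b}\bigr) = \mathbf{b}^\top D_1^\top \, \alpha(t)(\alpha(t))^\top \, D_1 \mathbf{b}.
\end{equation*}
Next I would integrate both sides over $[a,b]$. Because $\mathbf{b}$ and $D_1$ are constant in $t$, they pull outside the integral, giving
\begin{equation*}
L_{AD} = \mathbf{b}^\top D_1^\top \!\left( \int_a^b \alpha(t)(\alpha(t))^\top \, \mathrm{d}t \right)\! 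D_1 \mathbf{b} = \mathbf{b}^\top D_1^\top G_{\alpha} D_1 \mathbf{b} = \mathbf{b}^\top A_1 \mathbf{b},
\end{equation*}
where the middle equality uses the definition of $G_\alpha$ in \eqref{gram} and the last uses the definition of $A_1$.

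For positive semidefiniteness, I would observe that for any $\mathbf{v} \in \mathbb{R}^k$,
\begin{equation*}
\mathbf{v}^\top A_1 \mathbf{v} = \int_a^b \bigl((\alpha(t))^\top D_1 \mathbf{v}\bigr)^2 \, \mathrm{d}t \geq 0,
\end{equation*}
so $A_1 \succeq 0$. The one subtlety worth flagging, which I would mention briefly, is that this is not generally strict: if the implicit degree $n$ is too low to admit an exact implicitization, one still has $A_1 \succeq 0$, while for coefficient vectors $\mathbf{v}$ in $\ker(D_1)$ (or giving an exact implicitization) the integrand vanishes identically and we only get equality. That is precisely why the statement asserts semidefiniteness rather than definiteness, and why in the sequel one must minimize $\mathbf{b}^\top A_1 \mathbf{b}$ under an additional normalization on $\mathbf{b}$.
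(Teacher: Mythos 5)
Your proof is correct and follows the natural, standard route. Worth noting: the paper itself does not prove Lemma~\ref{lemma_dokken} --- it cites \cite{barrowclough2012approximate} and takes the result as given --- but the derivation you give (square the scalar $(\alpha(t))^\top D_1\mathbf{b}$, pull the constant $D_1$ and $\mathbf{b}$ through the integral to expose $G_\alpha$, then read off semidefiniteness from the integral of a nonnegative square) is exactly the argument the paper re-uses in its own proof of the analogous quadratic-form statement for $L_{WG}$, so you have matched the paper's methodology precisely. Your closing remark distinguishing semidefiniteness from definiteness (degenerate cases coming from $\ker D_1$ or exact implicitizations, hence the need for the normalization $\|\mathbf{b}\|=1$) is a small but genuine addition that the paper leaves implicit.
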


Lemma \ref{lemma_dokken} shows that $L_{AD}$ is a homogeneous quadratic form of $\mathbf{b}$.
In order to avoid the null vector $\mathbf{b}=0$, Dokken introduce the normalization
$\left \| \mathbf{b} \right \| =1$. Denote by $\mathbf{b}_{DM}$ the unit eigenvector 
corresponding to the smallest eigenvalue of $A_1$,
then $\mathbf{b}_{DM}$ is the solution of the Dokken's method for minimizing 
$L_{AD}$
subject to $\left \| \mathbf{b} \right \| =1$.

\section{Methodology for Polynomial Curves}\label{polynomial method}
In this section, we provide the approximate implicitization methodology for polynomial curves. 
First, we propose the weak gradient constraint to keep
the gradient vector of $f_{\mathbf{b}}(x,y)=0$ and 
the tangent vector of $\mathbf{p}(t)$ being perpendicular.
Then, we represent the objective function into the matrix form.
Finally, we propose the adaptive implicitization algorithm 
to compute the ``optimal'' implicitization $f_{\mathbf{b}}(x,y)$
and do some experiments to show the validity of the algorithm.
\subsection{Distance constraint}
We use the squared AD constraint in Equation (\ref{l_ad}):
\begin{equation*} 
\label{al_dis}
    L_{AD}
=\mathbf{b}^\top A_1 \mathbf{b}.
\end{equation*}
\subsection{Weak gradient constraint}
To obtain a non-trivial solution, 
the implicitization problem must be regularized by
restricting $f_{\mathbf{b}}$ to some specified class 
of functions. One reasonable approach is to require 
that this be the class of ``shape-preserving'' 
functions. 
We present the so-called weak gradient (WG for short) constraint:
\begin{equation}\label{lwg}
    L_{WG}=\left \| \bigtriangledown f_{\mathbf{b}}\cdot{\mathbf{p}}'  \right \|
    =\int_{a}^{b}\left [ \bigtriangledown f_{\mathbf{b}}(\mathbf{p}(t))\cdot{\mathbf{p}' (t)} \right ]^2 \mathrm{d} t, 
\end{equation}
where 
\begin{itemize}
    \item $\bigtriangledown f_{\mathbf{b}}(\mathbf{p}(t))$ is the
gradient vector of the implicit curve at the point $\mathbf{p}(t)$,
    \item ${\mathbf{p}'(t)}$ is the tangent 
vector of the parametric curve at the point $\mathbf{p}(t)$, and 
    \item the inner product
    \begin{equation*}
    \bigtriangledown f_{\mathbf{b}}(\mathbf{p}(t))\cdot{\mathbf{p}' (t)}
    =\left \| \bigtriangledown f_{\mathbf{b}}(\mathbf{p}(t)) \right \|
    \left \| {\mathbf{p}' (t)} \right \| \cos\theta,
\end{equation*}
where $\theta$ denotes  the angle of $\bigtriangledown f_{\mathbf{b}}(\mathbf{p}(t))$ 
and $\mathbf{p}' (t)$ at the point $\mathbf{p}(t)$.
\end{itemize}
  
Compared to the strong gradient constraint in \cite{juttler2000least},
Our WG constraint only requires that the gradient vector 
of the implicit curve and 
the normal vector of the parametric curve
have same direction in any point.
Intuitively speaking, the WG constraint bend the direction of
the implicit curve closer to 
the parametric curve's.
If the inner product $\bigtriangledown f_{\mathbf{b}}(\mathbf{p}(t))\cdot{\mathbf{p}' (t)}$
equals $0$, then the tangents of the implicit and parametric curve
are exactly parallel.
The smaller the inner product is, the more similar are the
appearance of them.

\begin{theorem}
The WG constraint $L_{WG}$ in Equation (\ref{lwg}) can be written in 
a homogeneous quadratic form of $\mathbf{b}$ using the
basis $\alpha(t)$.
\end{theorem}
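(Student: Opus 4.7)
The plan is to mimic Dokken's factorization argument, with one extra step of differentiation. The key observation is that by the chain rule,
\begin{equation*}
\bigtriangledown f_{\mathbf{b}}(\mathbf{p}(t))\cdot \mathbf{p}'(t) \;=\; \frac{d}{dt}\bigl[f_{\mathbf{b}}(\mathbf{p}(t))\bigr],
\end{equation*}
so the integrand inside $L_{WG}$ is simply the square of the $t$-derivative of the scalar polynomial appearing in Lemma \ref{lemma_dokken}. Differentiating Dokken's identity $f_{\mathbf{b}}(\mathbf{p}(t))=(\alpha(t))^\top D_1\mathbf{b}$ gives
\begin{equation*}
\bigtriangledown f_{\mathbf{b}}(\mathbf{p}(t))\cdot \mathbf{p}'(t) \;=\; (\alpha'(t))^\top D_1\mathbf{b},
\end{equation*}
where $\alpha'(t)$ is the entrywise derivative of the basis vector.

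Next I would re-express $\alpha'(t)$ back in the basis $\alpha(t)$. Since each component of $\alpha(t)$ is a polynomial of degree at most $mn$, each component of $\alpha'(t)$ has degree at most $mn-1$ and therefore lies in the span of $\alpha(t)$. Hence there is a constant $(mn+1)\times(mn+1)$ differentiation matrix $M$ with $\alpha'(t) = M\,\alpha(t)$. Substituting yields
\begin{equation*}
\bigtriangledown f_{\mathbf{b}}(\mathbf{p}(t))\cdot \mathbf{p}'(t) \;=\; (\alpha(t))^\top D_2\,\mathbf{b}, \qquad D_2 \;:=\; M^\top D_1.
\end{equation*}
An equivalent construction is to form $D_2$ column by column, by expanding the scalar polynomial $\tfrac{d}{dt}\phi_i(\mathbf{p}(t))$ in the $\alpha(t)$-basis for each $i=1,\dots,k$.

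Squaring and integrating then gives the desired quadratic form:
\begin{equation*}
L_{WG} \;=\; \int_a^b \mathbf{b}^\top D_2^\top \alpha(t)(\alpha(t))^\top D_2\,\mathbf{b}\,\mathrm{d}t \;=\; \mathbf{b}^\top A_2\,\mathbf{b}, \qquad A_2 \;:=\; D_2^\top G_\alpha D_2,
\end{equation*}
where $G_\alpha$ is the Gram matrix from Equation (\ref{gram}); the matrix $A_2$ is positive semidefinite by construction.

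The proof is essentially a mechanical reduction to Dokken's setup, so there is no conceptual obstacle. The only slightly delicate step is the bookkeeping for the differentiation matrix $M$: its entries depend on the concrete choice of $\alpha(t)$ (monomial, Bernstein, Legendre, etc.), but in each standard case $M$ is a well-known constant matrix with a simple closed form. Since all manipulations are exact identities, no approximation is introduced, and the statement of the theorem follows directly.
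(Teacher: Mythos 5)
Your proof is correct and arrives at the same final decomposition $A_2 = D_2^\top G_\alpha D_2$ as the paper, but your route to $D_2$ is genuinely different and somewhat sharper. The paper's proof argues generically: it observes that $\bigtriangledown f_{\mathbf{b}}(\mathbf{p}(t))\cdot\mathbf{p}'(t)$ has degree $mn-1$ in $t$, so it must be expressible in the $\alpha(t)$-basis with coefficients that are linear in $\mathbf{b}$, and then simply names the resulting coefficient matrix $D_2$ without giving a formula. You instead notice the chain-rule identity
\begin{equation*}
\bigtriangledown f_{\mathbf{b}}(\mathbf{p}(t))\cdot \mathbf{p}'(t)=\frac{d}{dt}\bigl[f_{\mathbf{b}}(\mathbf{p}(t))\bigr],
\end{equation*}
which shows that the WG integrand is precisely the $t$-derivative of the AD integrand already factored in Lemma \ref{lemma_dokken}. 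Differentiating that factorization and re-expanding $\alpha'(t)=M\alpha(t)$ in terms of the constant differentiation matrix $M$ of the chosen basis then gives the closed-form $D_2 = M^\top D_1$. Both arguments are valid and yield the same conclusion, but yours is more constructive: it makes $D_2$ a cheap byproduct of the $D_1$ already built for the AD term (a single matrix multiplication by the known, basis-dependent $M$), whereas the paper's description suggests $D_2$ must be assembled from scratch by expanding each $\bigtriangledown\phi_i(\mathbf{p}(t))\cdot\mathbf{p}'(t)$. Your observation also makes transparent why the WG constraint is exactly one degree lower and why it vanishes precisely when $f_{\mathbf{b}}(\mathbf{p}(t))$ is constant in $t$, which is a useful conceptual bonus not mentioned in the paper.
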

\begin{proof}
We can perform the WG constraint as follows.
Since the expressions ${\mathbf{p}' (t)}$ and
$\bigtriangledown f_{\mathbf{b}}(\mathbf{p}(t))$
are polynomial vectors of degree $m-1$ and $(n-1)m$ in
$t$ respectively, their inner product $\bigtriangledown f_{\mathbf{b}}(\mathbf{p}(t))\cdot{\mathbf{p}' (t)}$ 
is a polynomial of degree $nm-1$ in $t$. Thus it also can be written as a 
linear combination of $\alpha(t)$, which is the basis for univariate polynomials of degree $mn$.
Every coefficient of this linear combination
is a linear expression of $\mathbf{b}$.
As a result, the inner product can
be factored into 
\begin{align*}
    \bigtriangledown f_{\mathbf{b}}(\mathbf{p}(t))\cdot{\mathbf{p}' (t)}
    &=({\alpha}(t))^\top
    \left ( \begin{matrix}
\mathbf{b}\mbox{'s linear expression}\\
\vdots \\
\mathbf{b}\mbox{'s linear expression}
\end{matrix} \right ) \\
    &=({\alpha}(t))^\top D_2\mathbf{b},
\end{align*}
where $D_2$ 
is the collocation matrix whose rows are the coefficients of 
``$\mathbf{b}$'s linear expressions'' expressed
in $\mathbf{b}$.
Finally, the WG constraint can be written as
\begin{equation}\label{eq_lwg}
 L_{WG}
 =\int_{a}^{b}\left [ \bigtriangledown f_{\mathbf{b}}(\mathbf{p}(t))\cdot{\mathbf{p}' (t)} \right ]^2 \mathrm{d} t
 =\mathbf{b}^\top A_2 \mathbf{b},
\end{equation}
where 
\begin{equation}
A_2 = D_2^\top G_{\alpha}D_2
\end{equation}
is a positive semidefinite matrix and $G_{\alpha}$
is the Gram matrix of the basis $\alpha(t)$ in Equation (\ref{gram}).
\end{proof}

\subsection{Putting things together}
Summing up, due to the Equation(\ref{l_ad}) and (\ref{eq_lwg}),
the approximate
implicitization is found by minimizing 
the positive semidefinite quadratic objective function
\begin{equation}\label{obfun}
    L_{\lambda,n}(\mathbf{b})=
    L_{AD} +\lambda L_{WG}
    =\mathbf{b}^\top (A_1+ \lambda A_2) \mathbf{b}
\end{equation}
over the coefficients $\mathbf{b}$ of
$f_{\mathbf{b} }(x,y)$,  while keeping the 
degree $n$ of $f_{\mathbf{b} }(x,y)$ fixed. 
The first term in Equation (\ref{obfun}) 
measures the fidelity of the implicit curve to the given parametric curve, 
and the second term in Equation (\ref{obfun}) try to 
maintain geometric features that the implicit curve must have. 
The trade-off between these requirements is controlled by $\lambda>0$, called the regulator gain.

Similar to the Dokken's Method, 
denote by $\mathbf{b}_{WGM}$ the unit eigenvector 
corresponding to the smallest eigenvalue of
\begin{equation*}
    A=A_1+ \lambda A_2,
\end{equation*}
then $\mathbf{b}_{WGM}$ is the solution for minimizing Equation (\ref{obfun})
subject to $\left \| \mathbf{b} \right \| =1$.
\subsection{Adaptive implicitization algorithm}
The adaptive implicitization is to obtain 
the ``optimal'' degree $n_{op}$ for the implicit polynomial
$f_{\mathbf{b}}$, where $1 \le n_{op} \le n_{\max}$.
We estimate $n_{op}$ via the behavior of the WG constraint
as the implicit degree $n$ increases.
We have done lots of experiments on examining the 
change trend of the WG's loss,
and find that the change usually goes through three stages
as $n$ increases:
\begin{itemize}
    \item First, the WG's loss drops significantly (i.e. underfitting).
    \item Second, the WG's loss reaches the minimum, and then changes very slightly
    (i.e. justfitting).
    \item Third, the WG's loss increases conversely (i.e. overfitting).
\end{itemize}
Thus, we introduce two thresholds for the stopping criterion:
\begin{itemize}
    \item $\epsilon_{AD}$: to examine whether the AD's loss 
    in Equation (\ref{l_ad}) satisfies our default precision;
    \item $\epsilon_{WG}$: to check the monotonicity of the WG's loss
    in Equation (\ref{eq_lwg}) to avoid overfitting.
\end{itemize}
\begin{algorithm}
\caption{WGM for polynomial curves}\label{alg:two}
\KwData{The polynomial curve $\mathbf{p}(t)$;
the maximum implicit degree $n_{\max}$;
the stopping thresholds $\epsilon_{AD},\epsilon_{WG}$.}
\KwResult{The coefficient vector $\mathbf{b}_{WGM}$ of the implicit 
polynomial $f_{\mathbf{b}}$.}
$n\leftarrow 1$\;
\While{$n\leq n_{\max}$}{
Construct the collocation matrix $D_1^{(n)}$ 
of the AD constraint\;
Construct the collocation matrix $D_2^{(n)}$ 
of the WG constraint\;
Compute the Gram matrix $G_{\alpha}^{(n)}$\;
$A_1^{(n)}, A_2^{(n)}\leftarrow 
(D_1^{(n)})^\top G_{\alpha}^{(n)} D_1^{(n)}, (D_2^{(n)})^\top G_{\alpha}^{(n)} D_2^{(n)}$\;
$A^{(n)}\leftarrow A_1^{(n)}+\lambda A_2^{(n)}$\;
$\mathbf{b}^{(n)}\leftarrow$ the unit eigenvector 
corresponding to the smallest eigenvalue of $A^{(n)}$\;
$e_1^{(n)}, e_2^{(n)}\leftarrow
(\mathbf{b}^{(n)})^\top A_1^{(n)}\mathbf{b}^{(n)},
(\mathbf{b}^{(n)})^\top A_2^{(n)}\mathbf{b}^{(n)}$\;
  \eIf{$n=n_{\max}$}{
  $\mathbf{b}_{WGM}\leftarrow\mathbf{b}^{(n)}$\;
  \Return{} 
  }{\If{$e_1^{(n)}\le\epsilon_{AD}$
and $|e_2^{(n)}-e_2^{(n-1)}|
\le \epsilon_{WG}$}{
  $\mathbf{b}_{WGM}\leftarrow\mathbf{b}^{(n)}$\;
  \Return{}
    }
  }
$n\leftarrow n+1$\;
}
\end{algorithm}

The adaptive implicitization methodology 
for polynomial curves, called the Weak Gradient Method (WGM for short), is summarized in Algorithm 1.
The inputs of Algorithm 1 are the parametric curve
$\mathbf{p}(t)$, the maximum implicit degree
$n_{\max}$, and thresholds $\epsilon_{AD}, \epsilon_{WG}$. 
The line 1 is employed to initialize the implicit
degree $n$. In the while loop (line 2 to line 20),
the matrix $A^{(n)}$ in the objective function $L_{\lambda,n}(\mathbf{b})$
is computed first using the collocation matrix 
$D_1^{(n)}$, $D_2^{(n)}$ and the Gram matrix $G_{\alpha}^{(n)}$;
then for the $n$th (current) cycle,
the ``optimal'' coefficient vector $b^{(n)}$ is found, 
the AD error 
$e_{1}^{(n)}$ and the WG error 
$e_{2}^{(n)}$ are computed subsequently;
if $n=n_{\max}$, which means that
the coefficient vectors
$b^{(1)},\ldots,b^{(n-1)}$ obtained 
in previous cycles are unacceptable,
then $b^{(n)}$ is treated as the final result and 
the algorithm is terminated under this 
circumstance (line 10 to line 12);
if $n<n_{\max}$ and $e_{1}^{(n)}, e_{2}^{(n)}$
satisfy the stopping criterions (line 14),
then return $b^{(n)}$ as the final result and 
terminate the algorithm; 
if none of the aforesaid If statement holds,
then let $n$ increases by one and go to the next cycle.

\subsection{Experiments}
\begin{figure}[h]%
\centering
\includegraphics[width=0.5\textwidth]{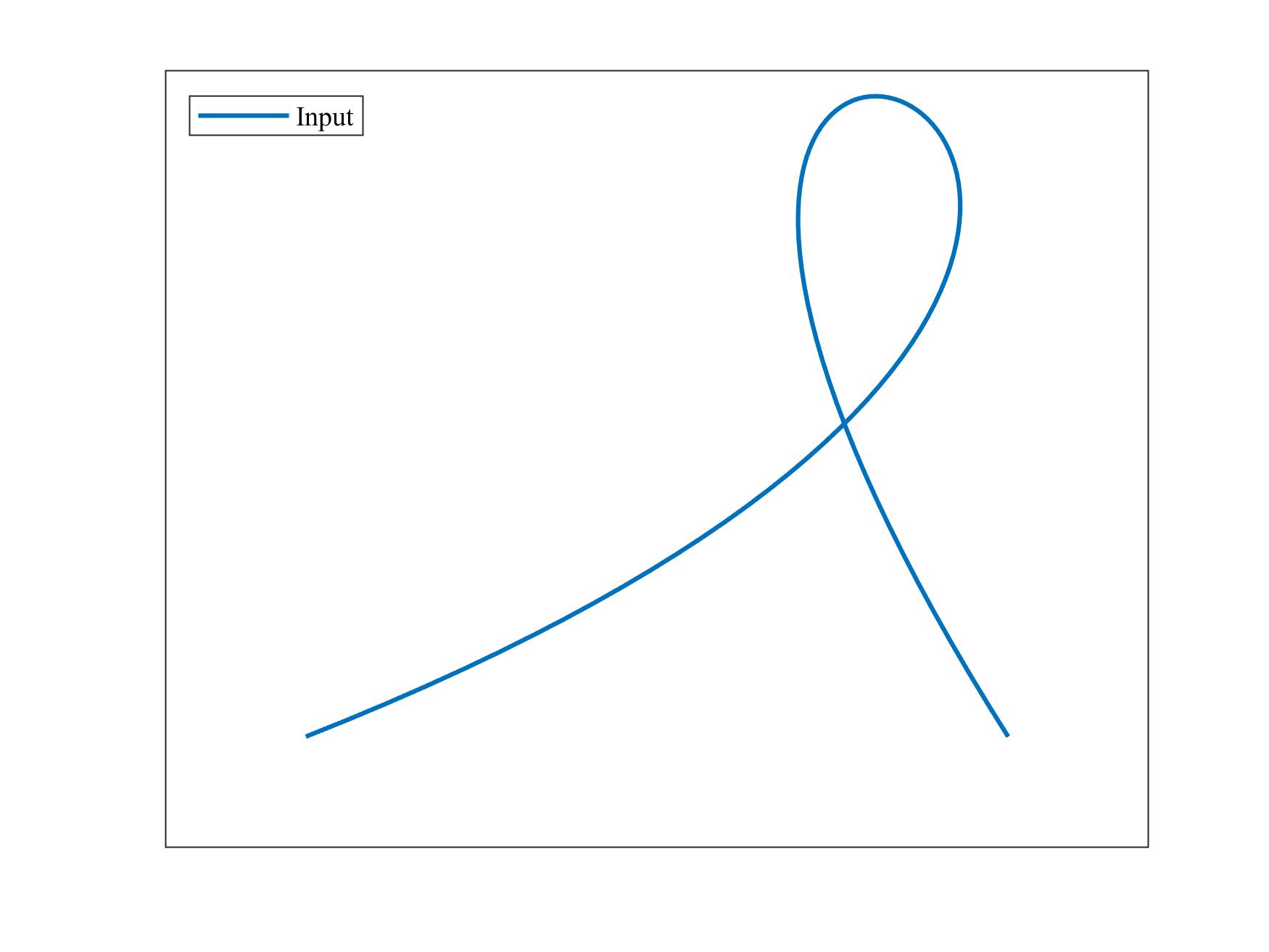}
\caption{The input curve $C_1(t)$.}\label{input_ex1}
\end{figure}
\begin{sidewaysfigure}
	\centering
	\subfigure[WGM ($n=2$)]{\begin{minipage}[c]{0.25\textwidth}
		\centering
		\includegraphics[width=\textwidth]{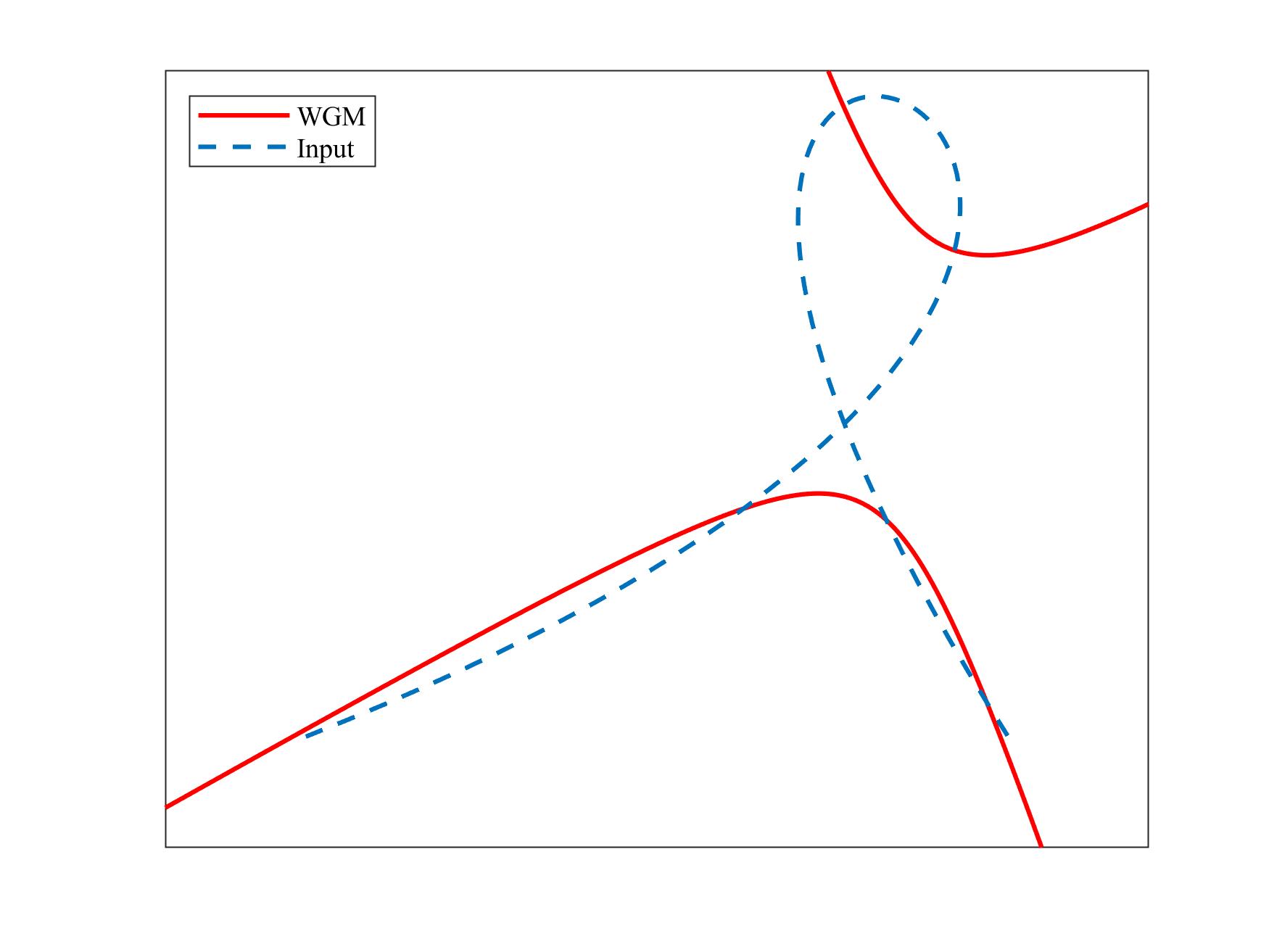}
	\end{minipage}}
	 \subfigure[WGM ($n=3$)]{\begin{minipage}[c]{0.25\textwidth}
		\centering
		\includegraphics[width=\textwidth]{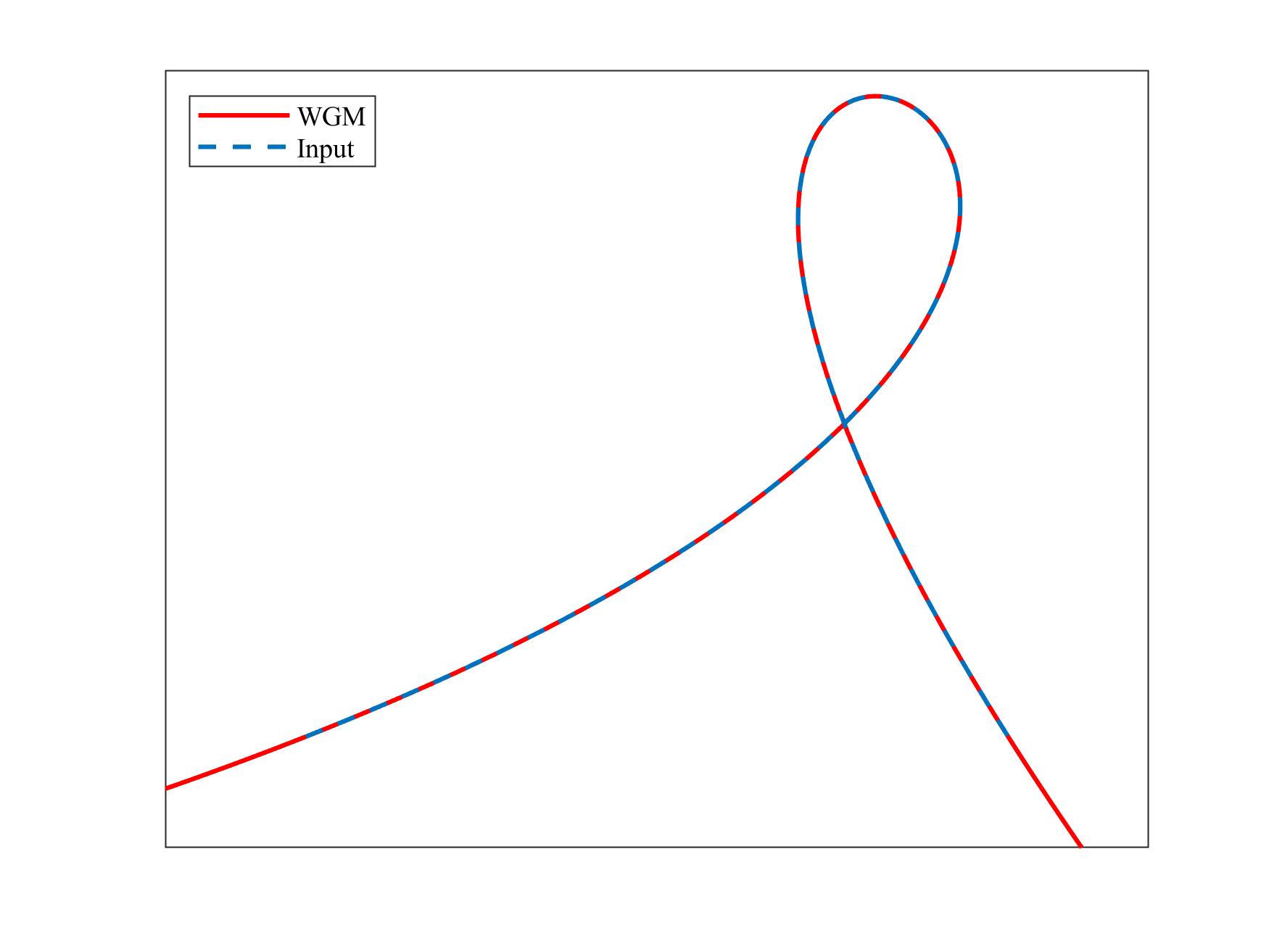}
	\end{minipage} }
	\subfigure[WGM ($n=5$)]{\begin{minipage}[c]{0.25\textwidth}
		\centering
		\includegraphics[width=\textwidth]{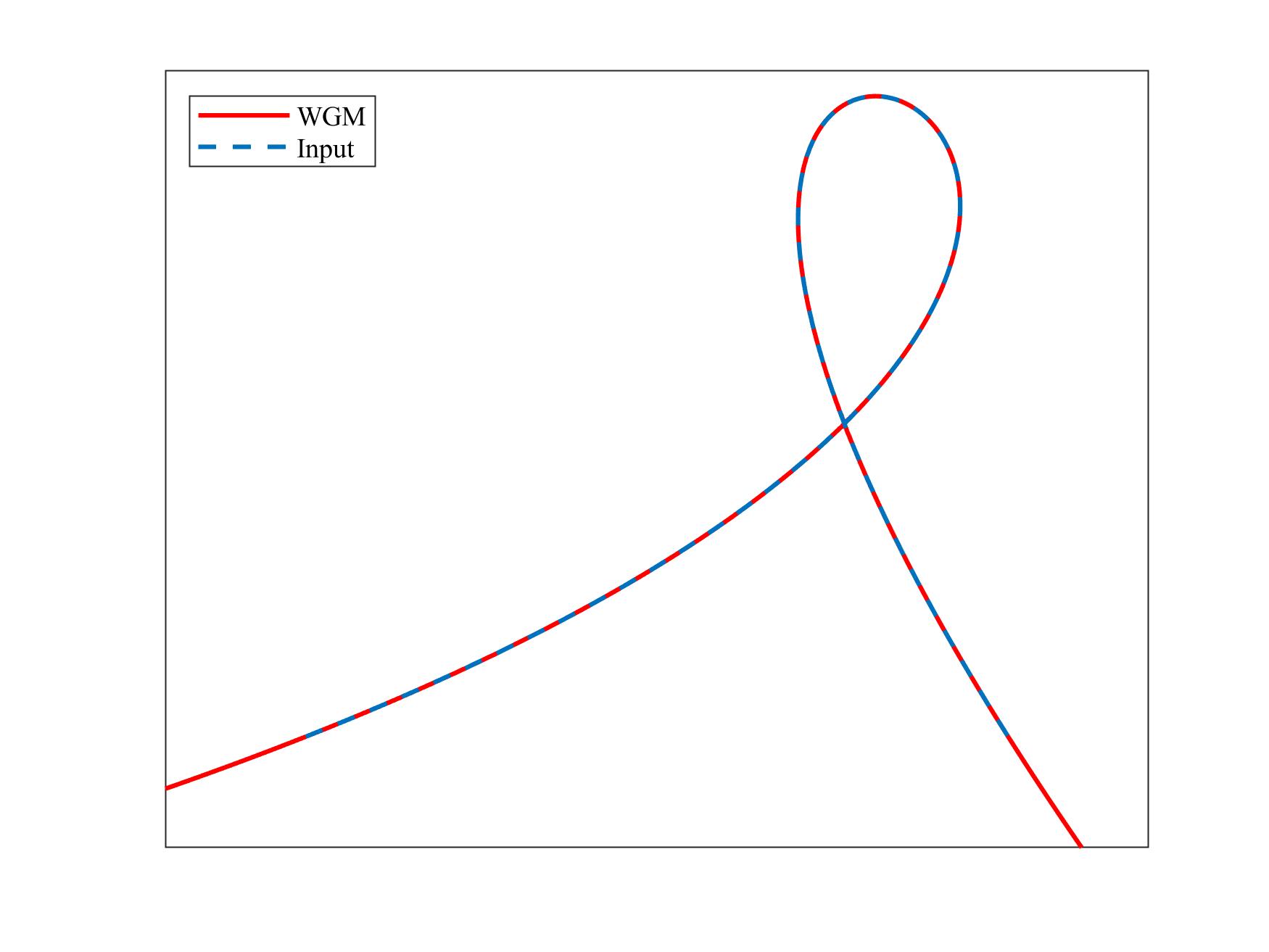}
	\end{minipage}}
	\\
	\subfigure[DM ($n=2$)]{\begin{minipage}[c]{0.25\textwidth}
		\centering
		\includegraphics[width=\textwidth]{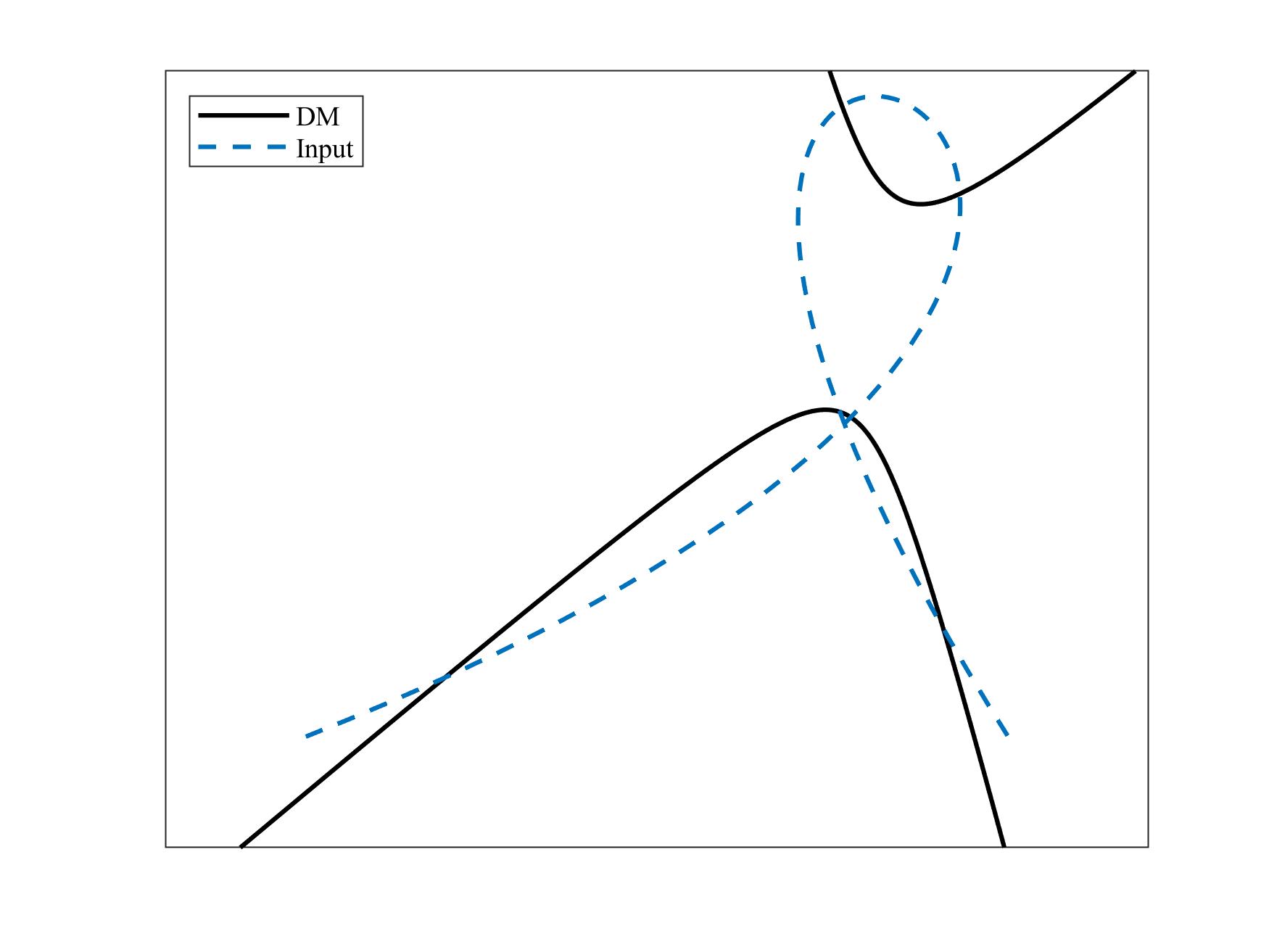}
	\end{minipage}}
	 \subfigure[DM ($n=3$)]{\begin{minipage}[c]{0.25\textwidth}
		\centering
		\includegraphics[width=\textwidth]{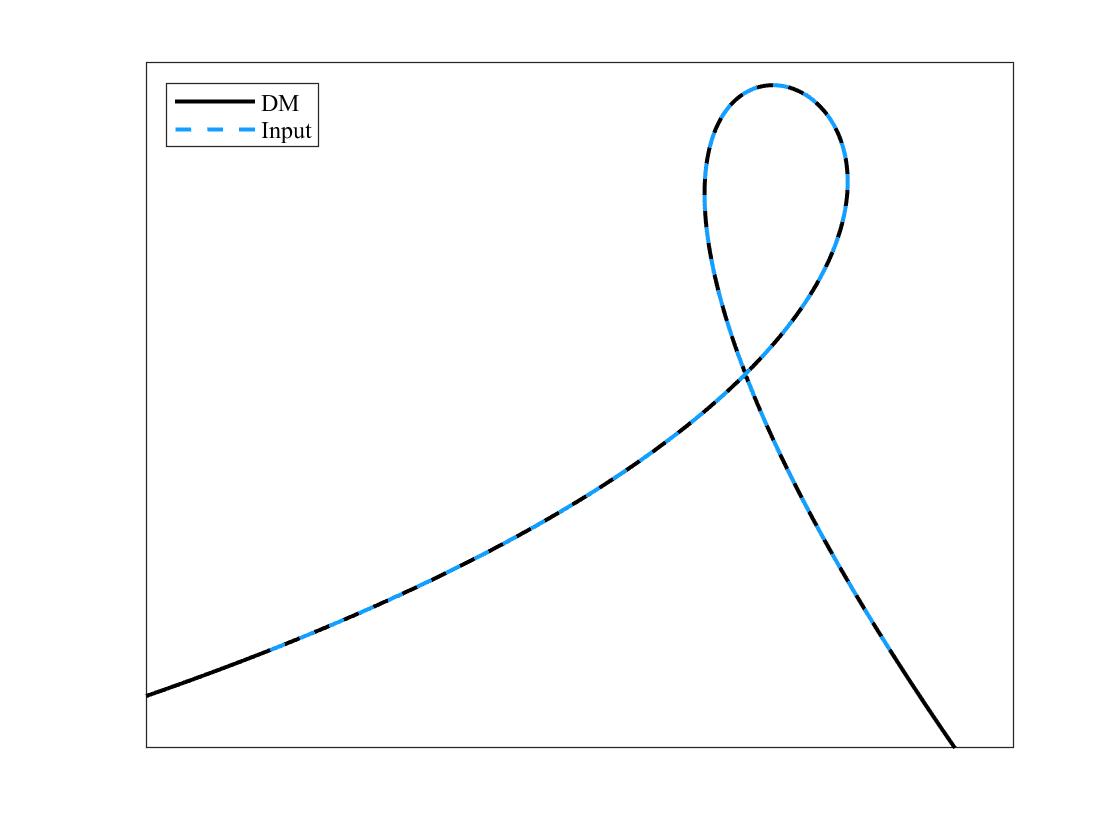}
	\end{minipage} }
	\subfigure[DM ($n=5$)]{\begin{minipage}[c]{0.25\textwidth}
		\centering
		\includegraphics[width=\textwidth]{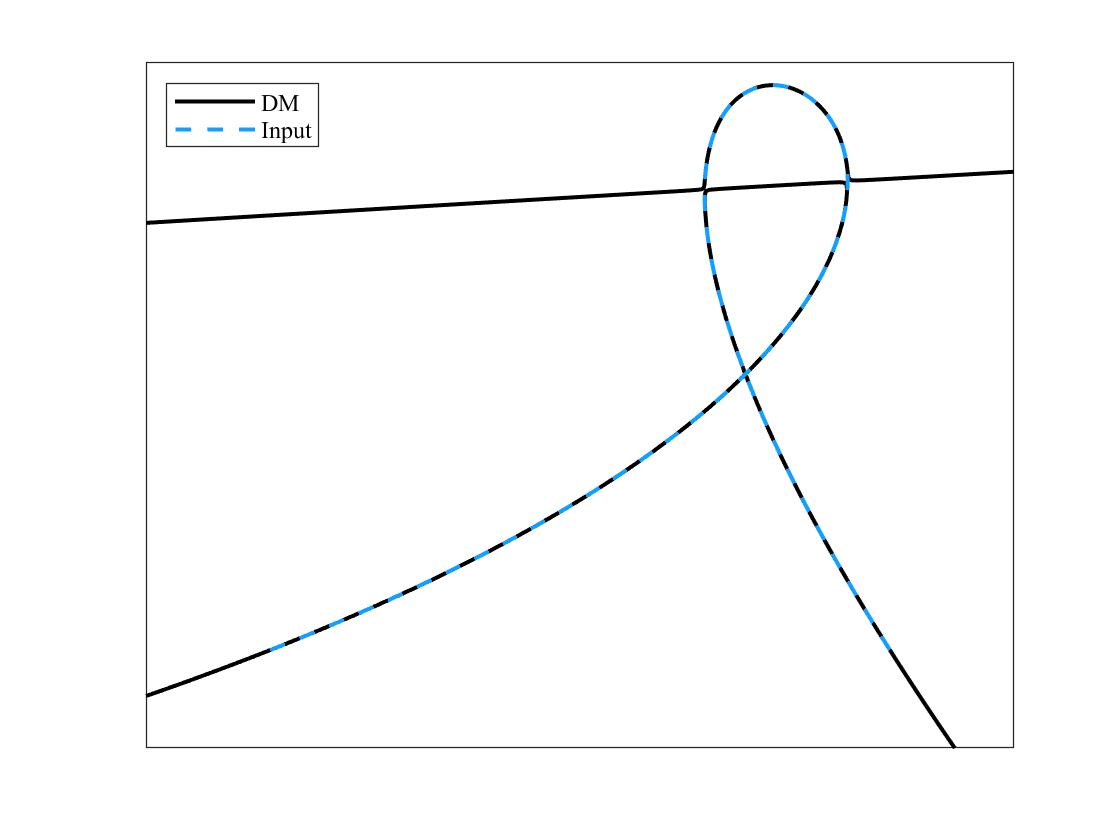}
	\end{minipage}}
	\caption{Adaptive implicitization of $C_1(t)$.
	The blue dash line in (a)-(f) is the input curve, 
	the red line in (a)-(c) is the output curve by our method, 
	and the black line in (d)-(f) is the output curve by Dokken's method. 
	From left to right: the implicit degree $n=2,3,5$.}
	\label{ex1_output}
\end{sidewaysfigure}

\begin{figure}[h]%
\centering
\includegraphics[width=0.5\textwidth]{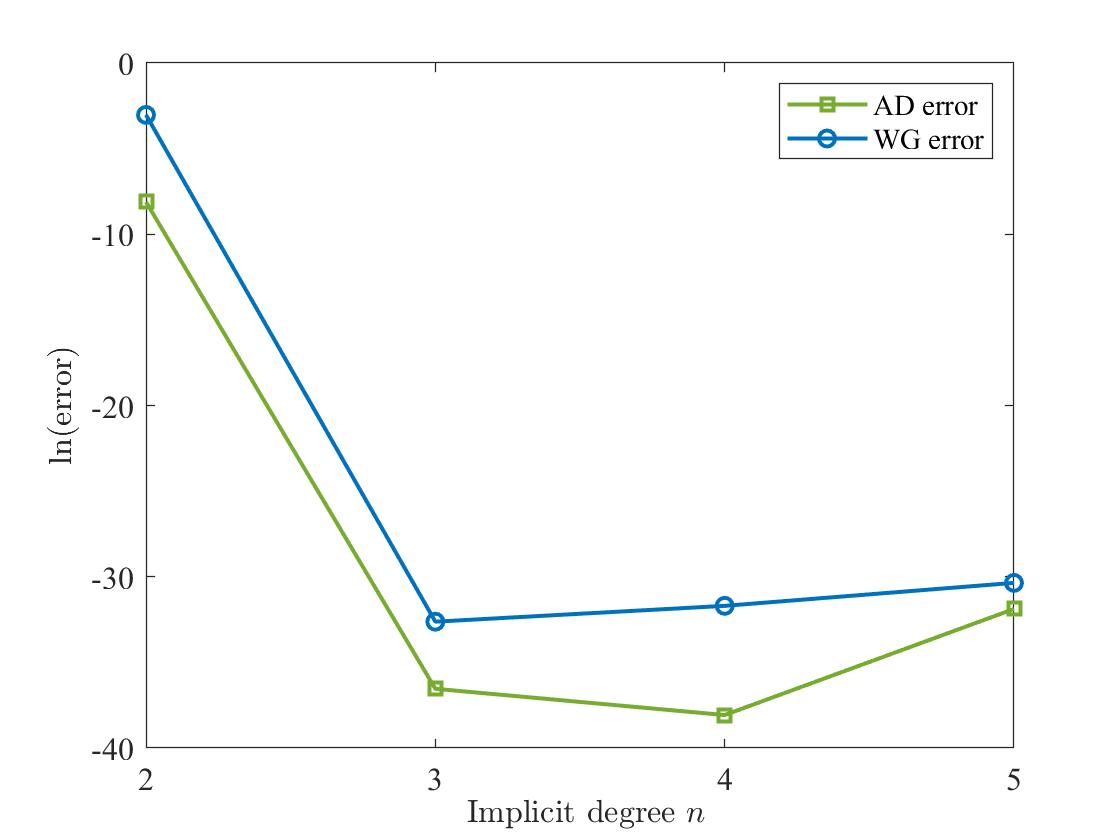}
\caption{Statistics of our method on changes of the AD and WG error for $C_1(t)$, as the implicit degree $n$ increases.}\label{error1}
\end{figure}

Additional branches (i.e. the extra zero contour) generated in the implicitization procedure
make the resulting curves challenging to be interpreted, and the elimination of additional branches 
is the main problem in designing the implicitization methods. 
\cite{juttler2005approximate} address this problem
by combining two or more eigenvectors (associated with small eigenvalues), 
which leads to a gradual decline of the accuracy.
With the WGM developed in this paper, 
additional branches can be avoided as much as possible in the implicitization procedure. 

We choose the basis $\alpha(t)$ to be the univariate Bernstein polynomial basis
of degree $p=mn+1$, i.e.
\begin{equation*}
    \alpha_{i}(t)=B_{i,p}(t)=\frac{p!}{i!(p-i)!}t^i(1-t)^{p-i},i=0,1,\ldots,p,
\end{equation*}
and the basis $\left \{ \phi_i(x,y) \right \}_{i=1}^{k}$ to be 
the bivariate monomial basis of total degree $n$.
We set the maximum implicit degree $n_{\max}=7$, the regulator gain $\lambda=0.1$, and the thresholds $\epsilon_{AD}=10^{-4}$ and $\epsilon_{WG}=10^{-3}$.
\begin{example}
Consider the polynomial parametric curve
\begin{equation*}
    C_1(t)=\left ( \begin{matrix}0
 \\
0
\end{matrix} \right ) B_{0,3}(t)+
\left ( \begin{matrix}2
 \\
1
\end{matrix} \right ) B_{1,3}(t)+
\left ( \begin{matrix}0
 \\
2
\end{matrix} \right )  B_{2,3}(t)+
\left ( \begin{matrix}1
 \\
0
\end{matrix} \right ) B_{3,3}(t),
\end{equation*}
where the parameter of $C_1(t)$ takes value in 
$[0, 1]$, and 
$\left\{ B_{i,3}(t)\right\}_{i=0}^3$ 
is the Bernstein polynomial basis of degree $3$.
$C_1(t)$ is shown in Figure \ref{input_ex1}.

The first row of Figure \ref{ex1_output} shows the adaptive implicitization process of $C_1(t)$ by the WGM.
Similarly, the second row of Figure \ref{ex1_output} shows the implicitization process of $C_1(t)$ by 
Dokken’s method. We can see that for every iteration, the WGM refrains from additional branches as much as possible, see (c) vs (f) in Figure \ref{ex1_output}. 

Figure \ref{error1} shows the statistic of our method on changes of the AD and WG error for $C_1(t)$, when the implicit degree $n$ is increasing.
\end{example}
\begin{figure}[h]%
\centering
\includegraphics[width=0.5\textwidth]
{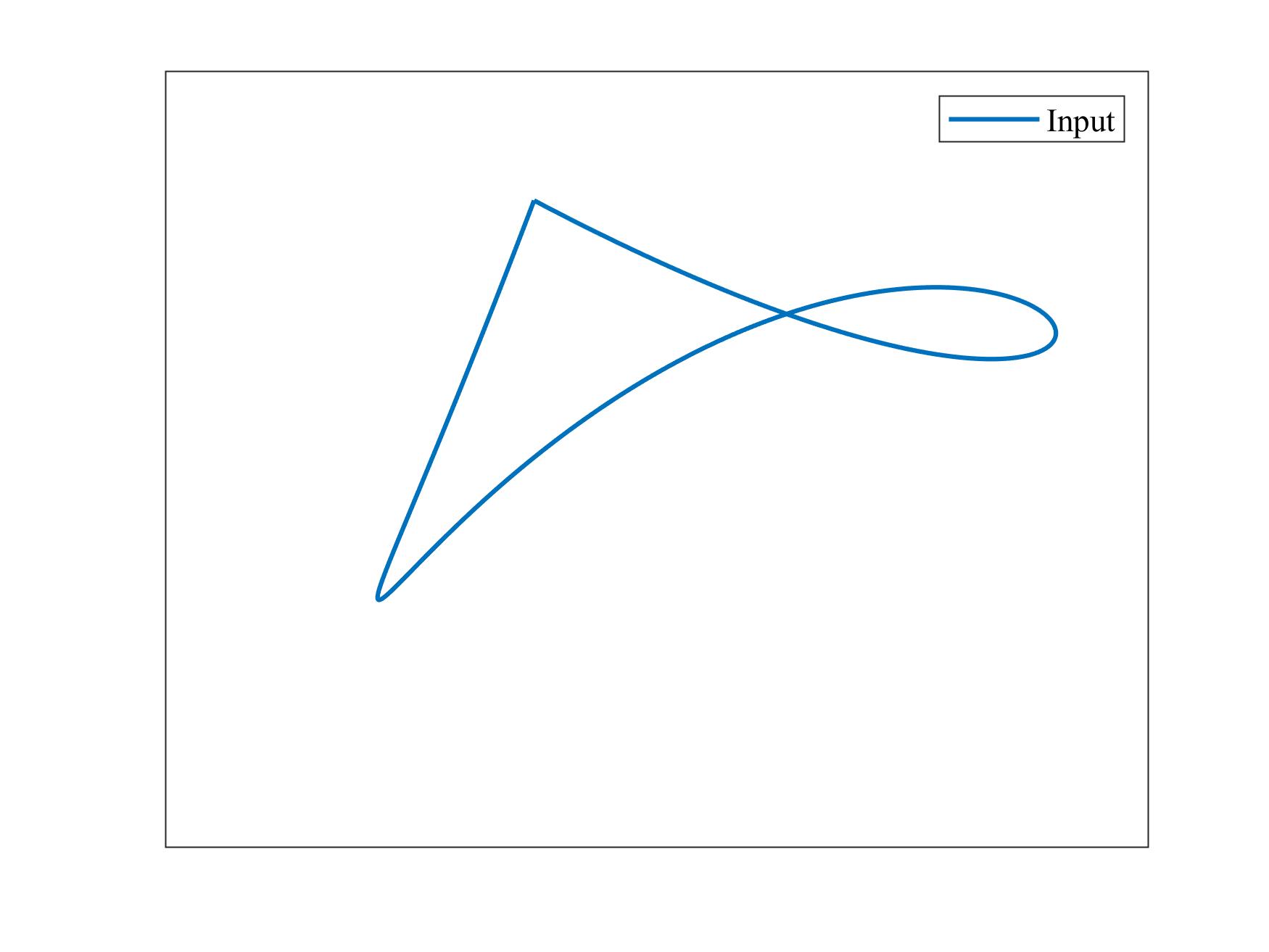}
\caption{The input curve $C_2(t)$.}\label{input_ex2}
\end{figure}
\begin{example}
Consider the polynomial parametric curve
\begin{equation*}
    C_2(t)=\left ( \begin{matrix}1
 \\
5
\end{matrix} \right ) B_{0,4}(t)+
\left ( \begin{matrix}-3
 \\
-15
\end{matrix} \right ) B_{1,4}(t)+
\left ( \begin{matrix}2
 \\
20
\end{matrix} \right )  B_{2,4}(t)+
\left ( \begin{matrix}11
 \\
-5
\end{matrix} \right ) B_{3,4}(t) +
\left ( \begin{matrix}1
 \\
5
\end{matrix} \right ) B_{4,4}(t),
\end{equation*}
where the parameter of $C_2(t)$ take values in 
$[0, 1]$, and 
$\left\{ B_{i,4}(t)\right\}_{i=0}^4$ 
is the Bernstein polynomial basis of degree $4$.
$C_2(t)$ is shown in Figure \ref{input_ex2}.

The first row of Figure \ref{ex2_output} shows the adaptive implicitization process of $C_2(t)$ by the WGM.
Similarly, the second row of Figure \ref{ex2_output} shows the implicitization process of $C_2(t)$ 
by Dokken’s method. We can see that for every iteration, 
the WGM's output curve will approach $C_2(t)$
closer than that of Dokken's method, from the viewpoint of ``shape-preserving''.
Moreover, the WGM refrains from additional branches as much as possible, 
see (c) vs (h), (d) vs (i), and (e) vs (j) in Figure \ref{ex2_output}.
\begin{sidewaysfigure}
	\subfigure[WGM ($n=2$)]{\begin{minipage}[c]{0.19\textwidth}
		\centering
		\includegraphics[width=\textwidth]{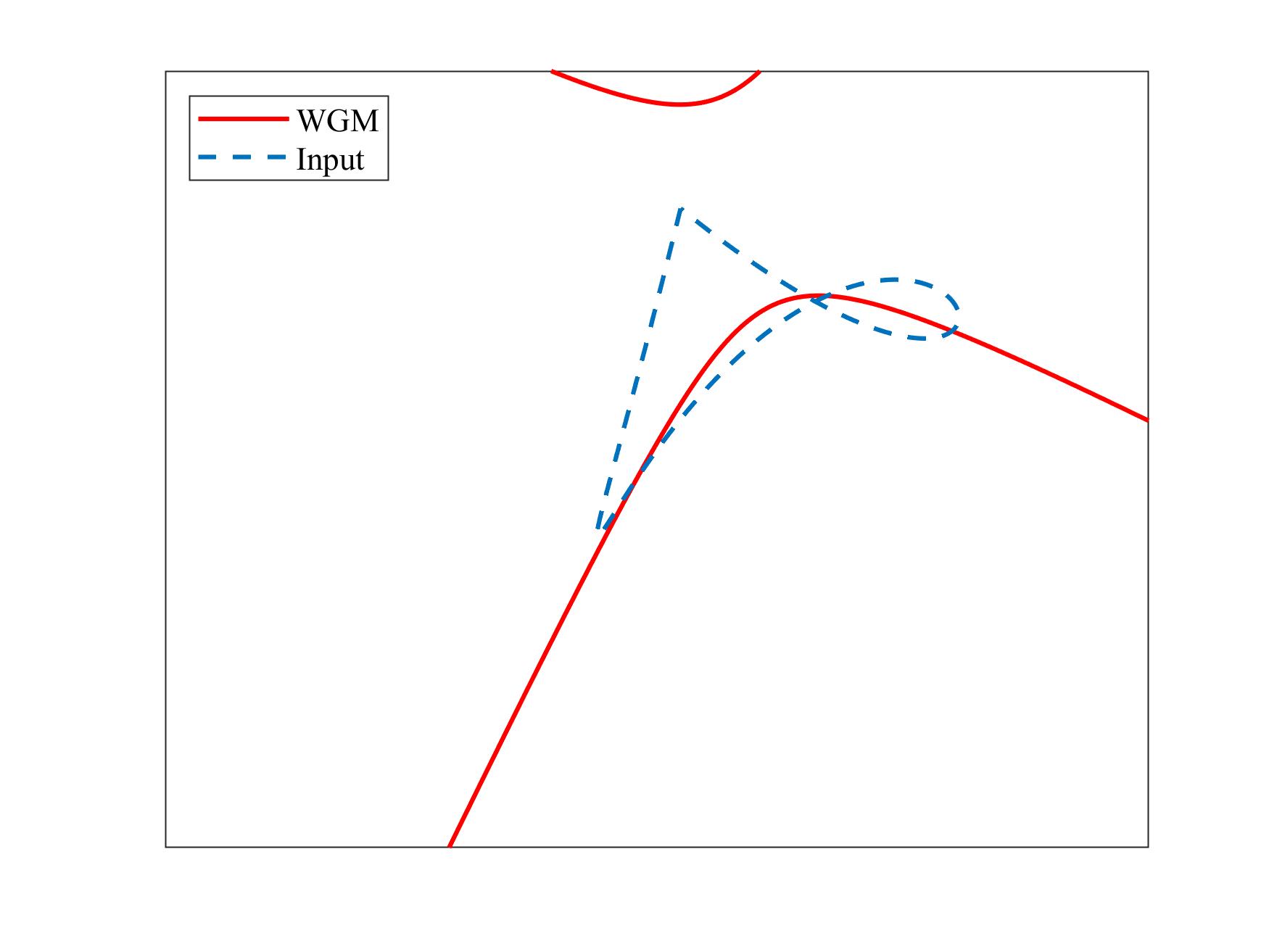}
	\end{minipage}}
	\subfigure[WGM ($n=3$)]{\begin{minipage}[c]{0.19\textwidth}
		\centering
		\includegraphics[width=\textwidth]{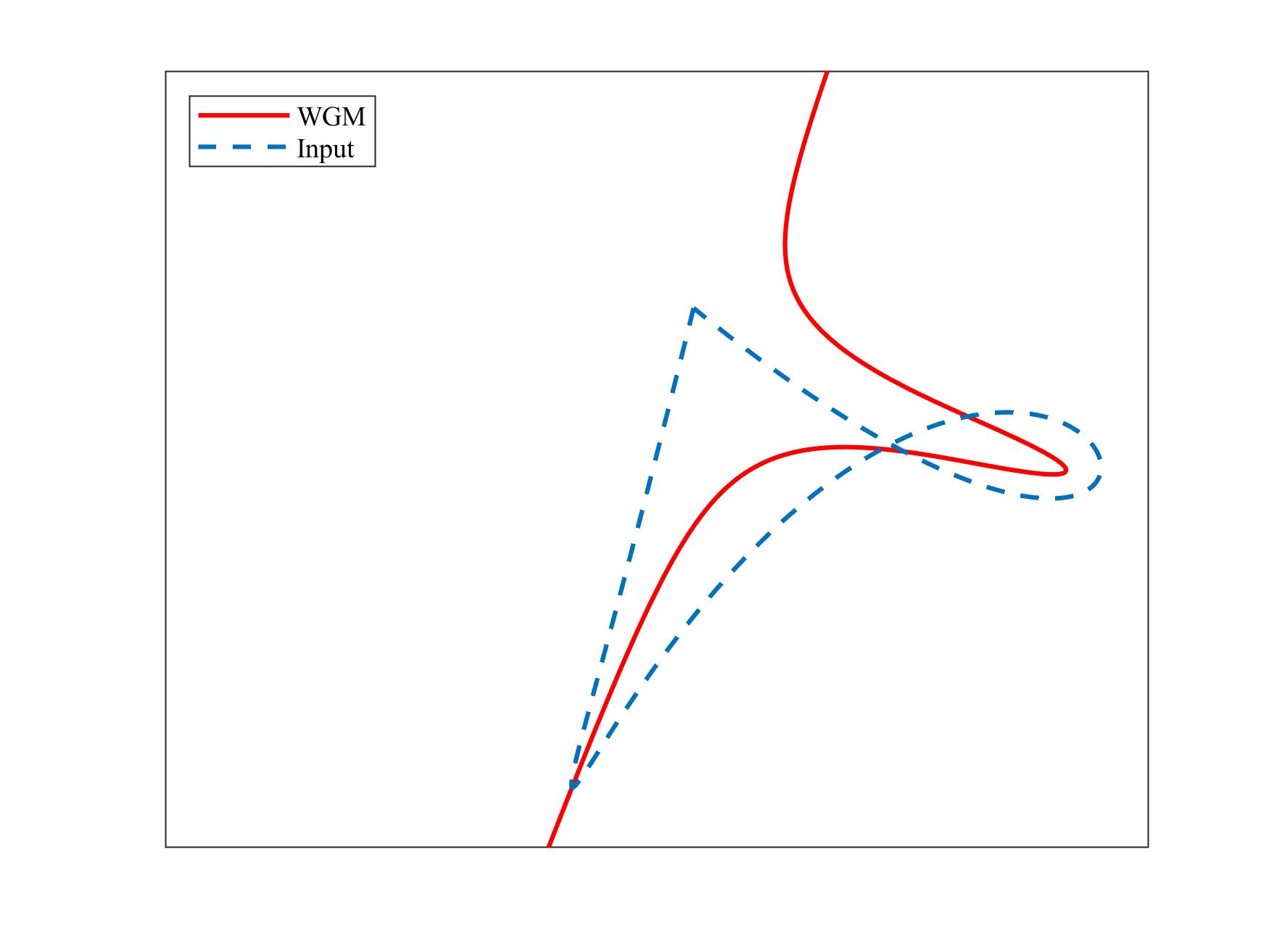}
	\end{minipage}}
	\subfigure[WGM ($n=4$)]{\begin{minipage}[c]{0.19\textwidth}
		\centering
		\includegraphics[width=\textwidth]{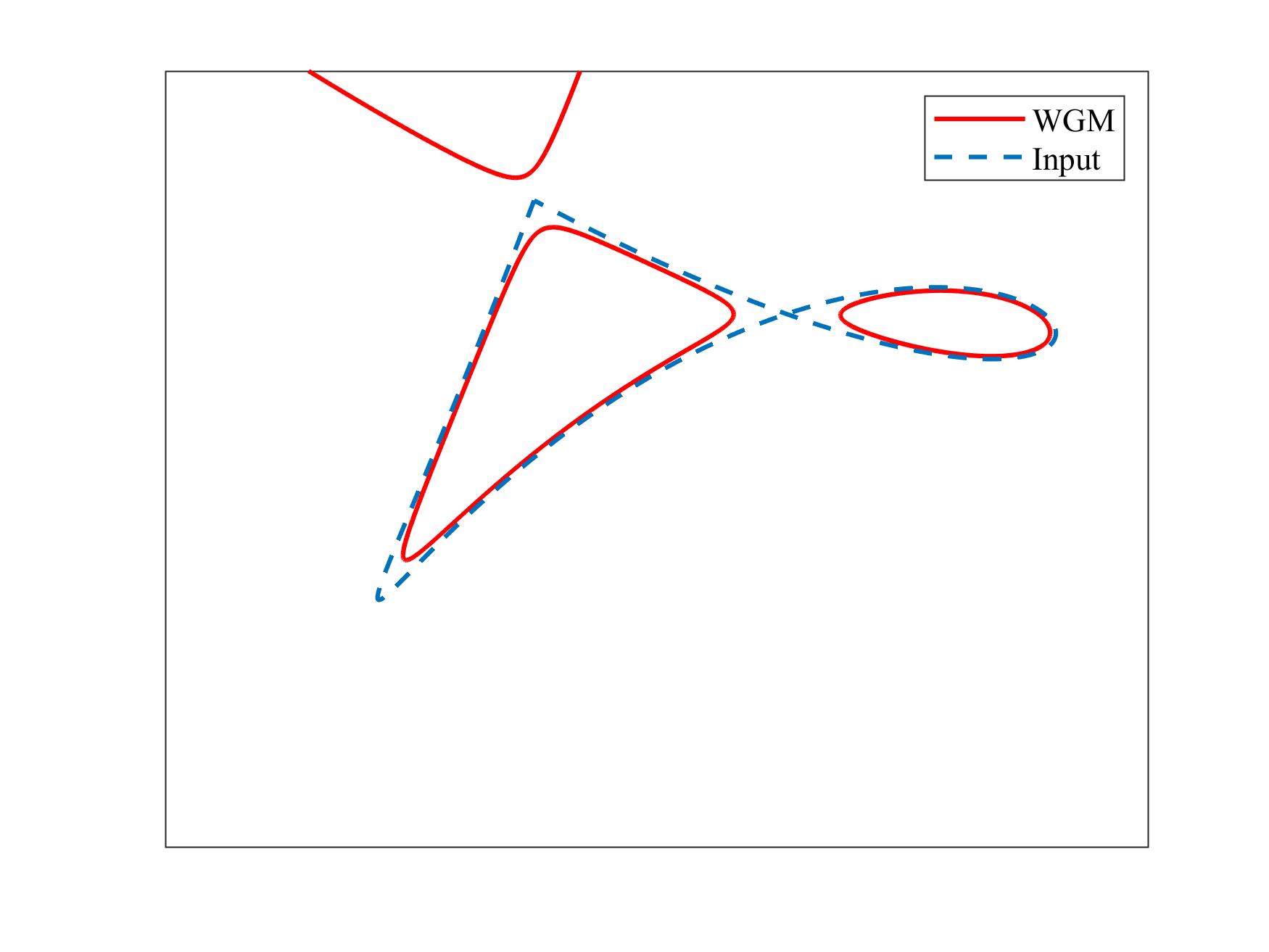}
	\end{minipage}}
	\subfigure[WGM ($n=5$)]{\begin{minipage}[c]{0.19\textwidth}
		\centering
		\includegraphics[width=\textwidth]{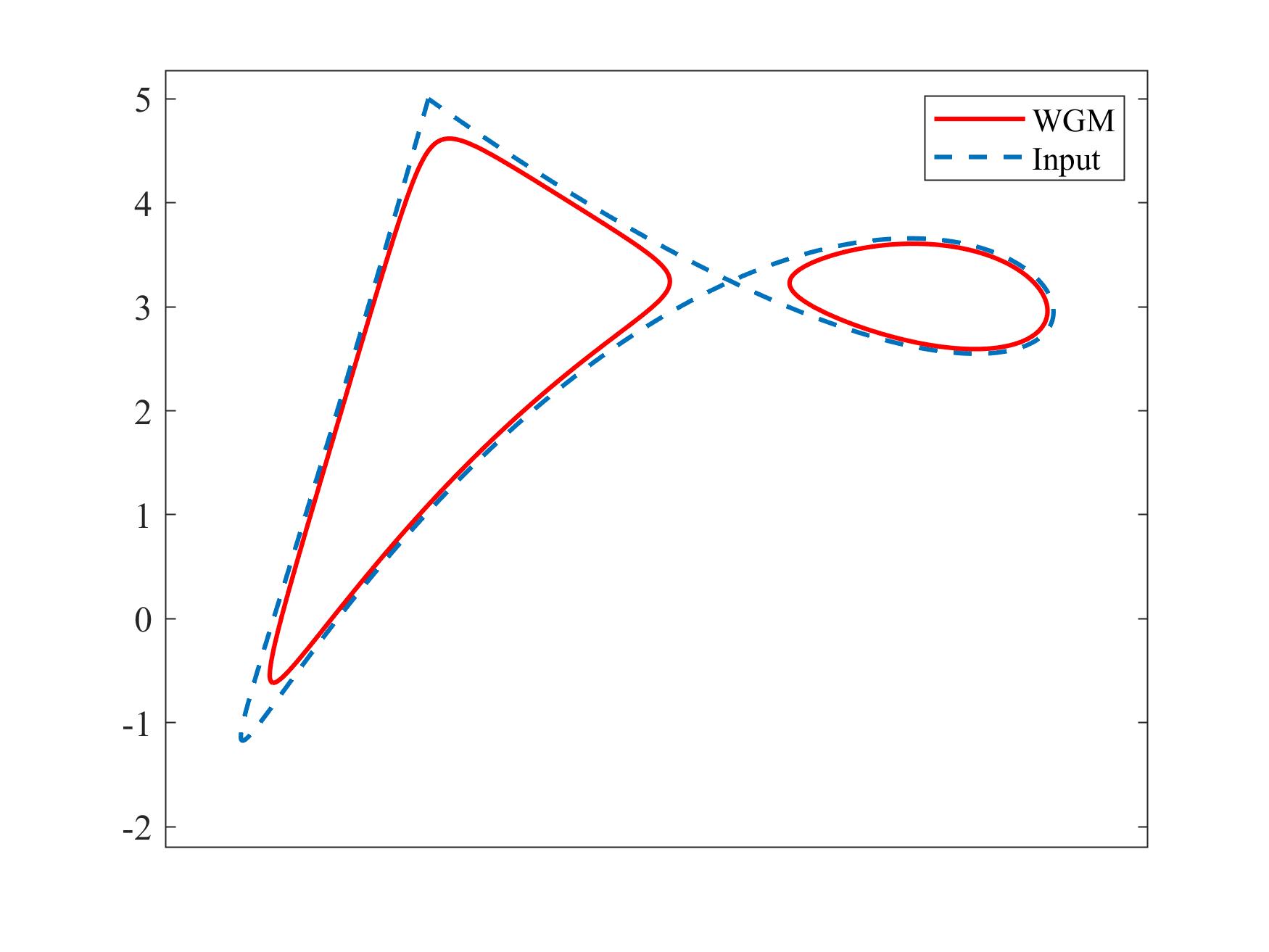}
	\end{minipage}}
	\subfigure[WGM ($n=6$)]{\begin{minipage}[c]{0.19\textwidth}
		\centering
		\includegraphics[width=\textwidth]{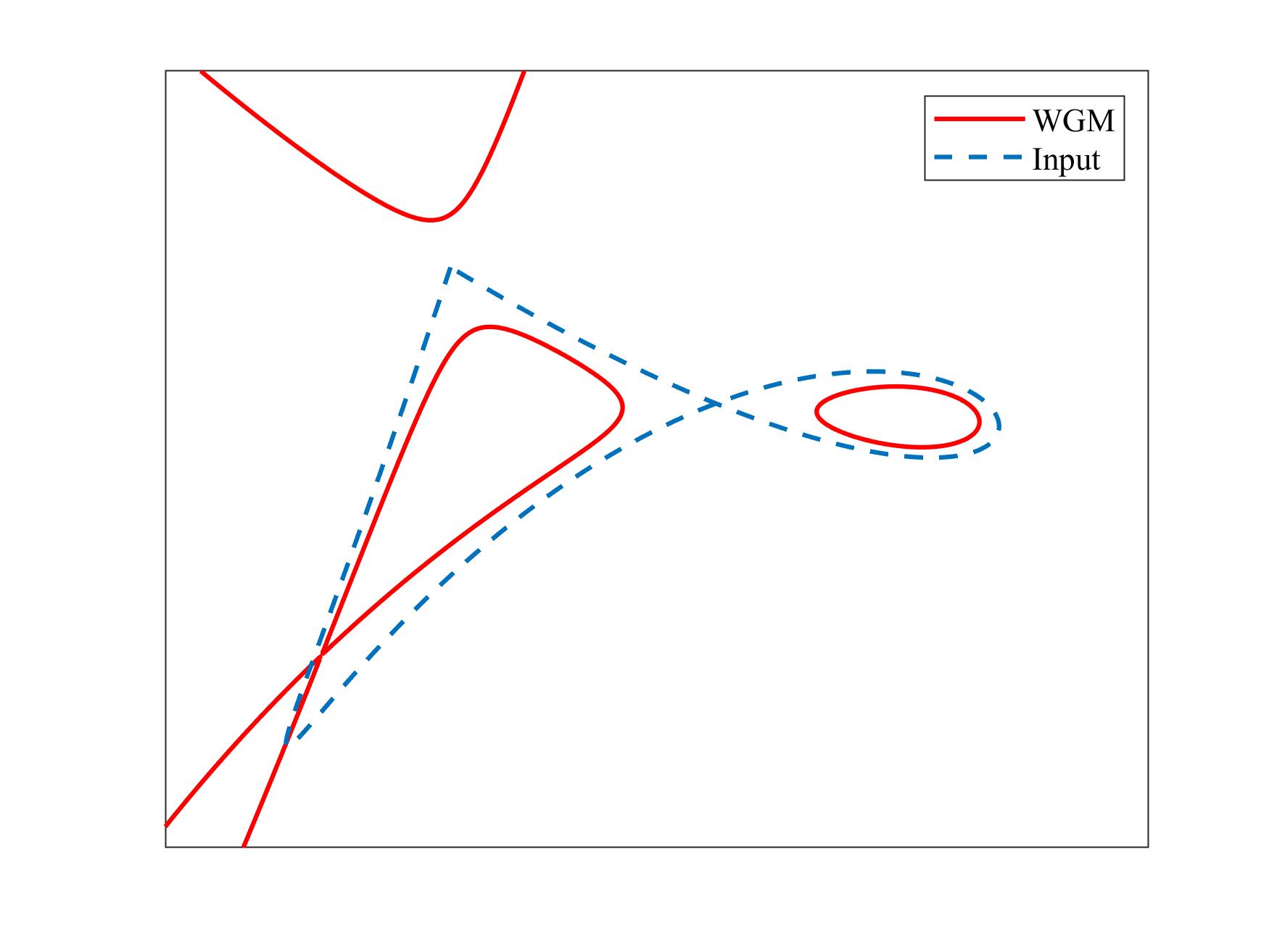}
	\end{minipage}}
	\\
	\subfigure[DM ($n=2$)]{\begin{minipage}[c]{0.19\textwidth}
		\centering
		\includegraphics[width=\textwidth]{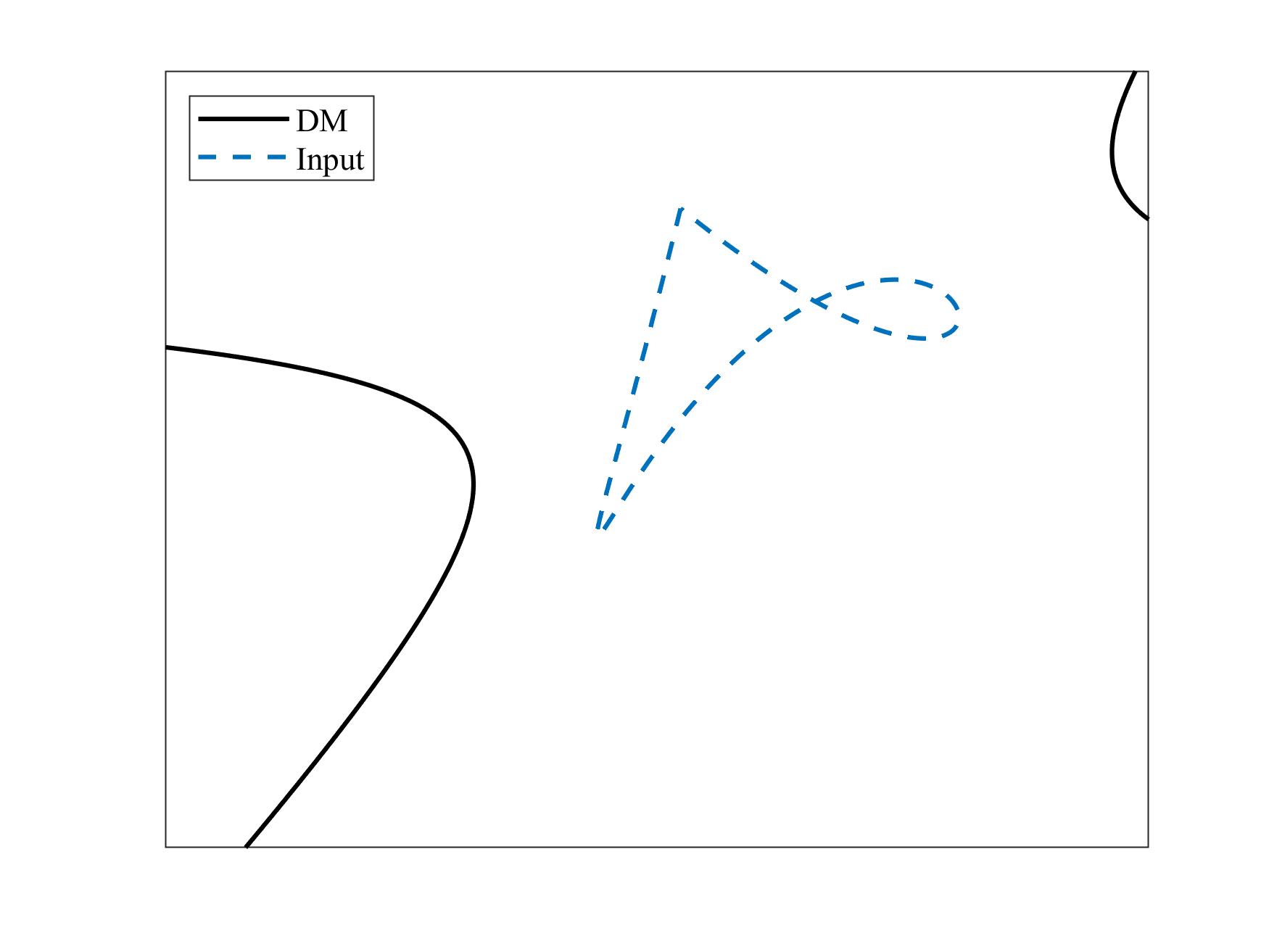}
	\end{minipage}}
	\subfigure[DM ($n=3$)]{\begin{minipage}[c]{0.19\textwidth}
		\centering
		\includegraphics[width=\textwidth]{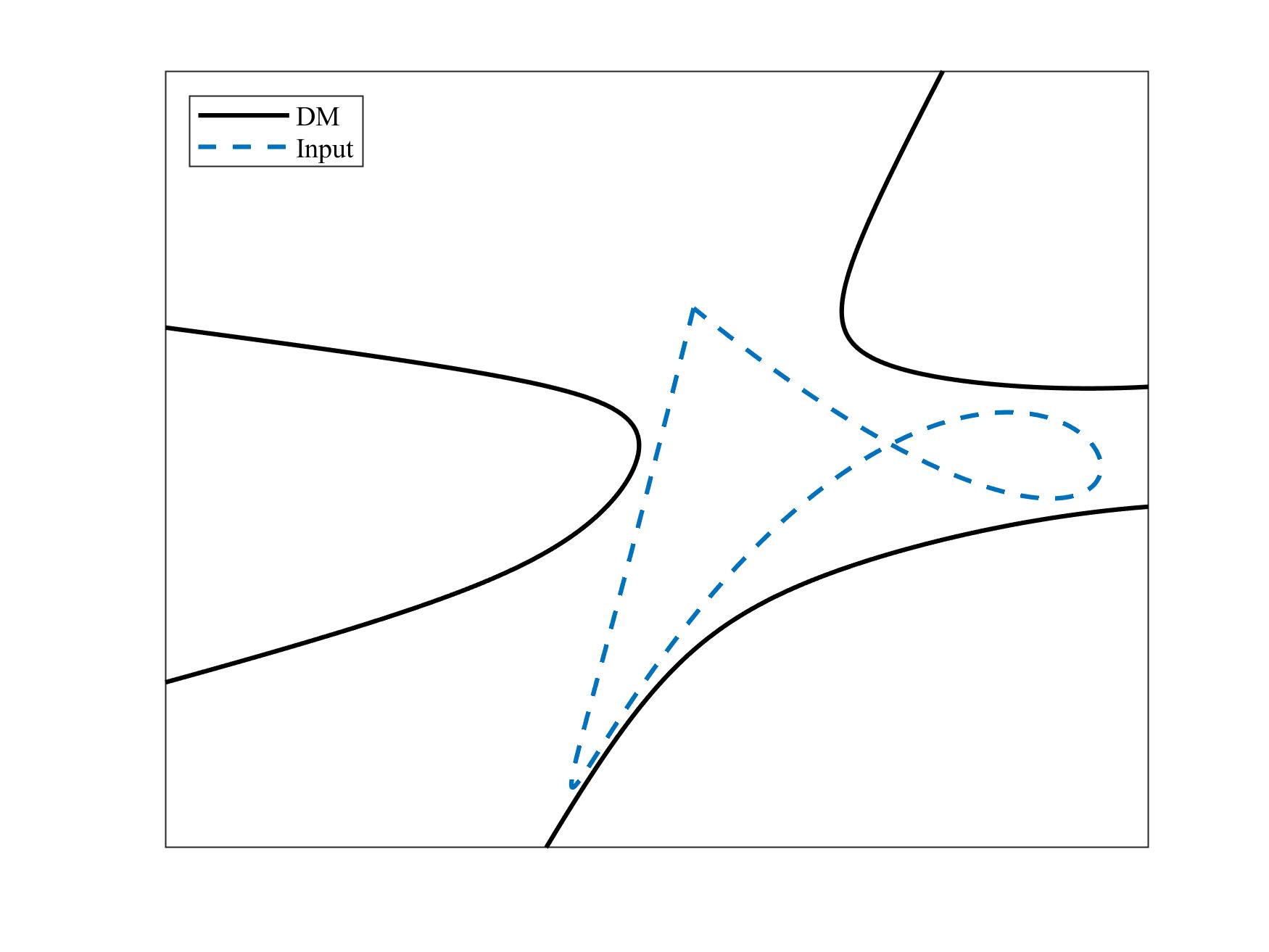}
	\end{minipage}}
	\subfigure[DM ($n=4$)]{\begin{minipage}[c]{0.19\textwidth}
		\centering
		\includegraphics[width=\textwidth]{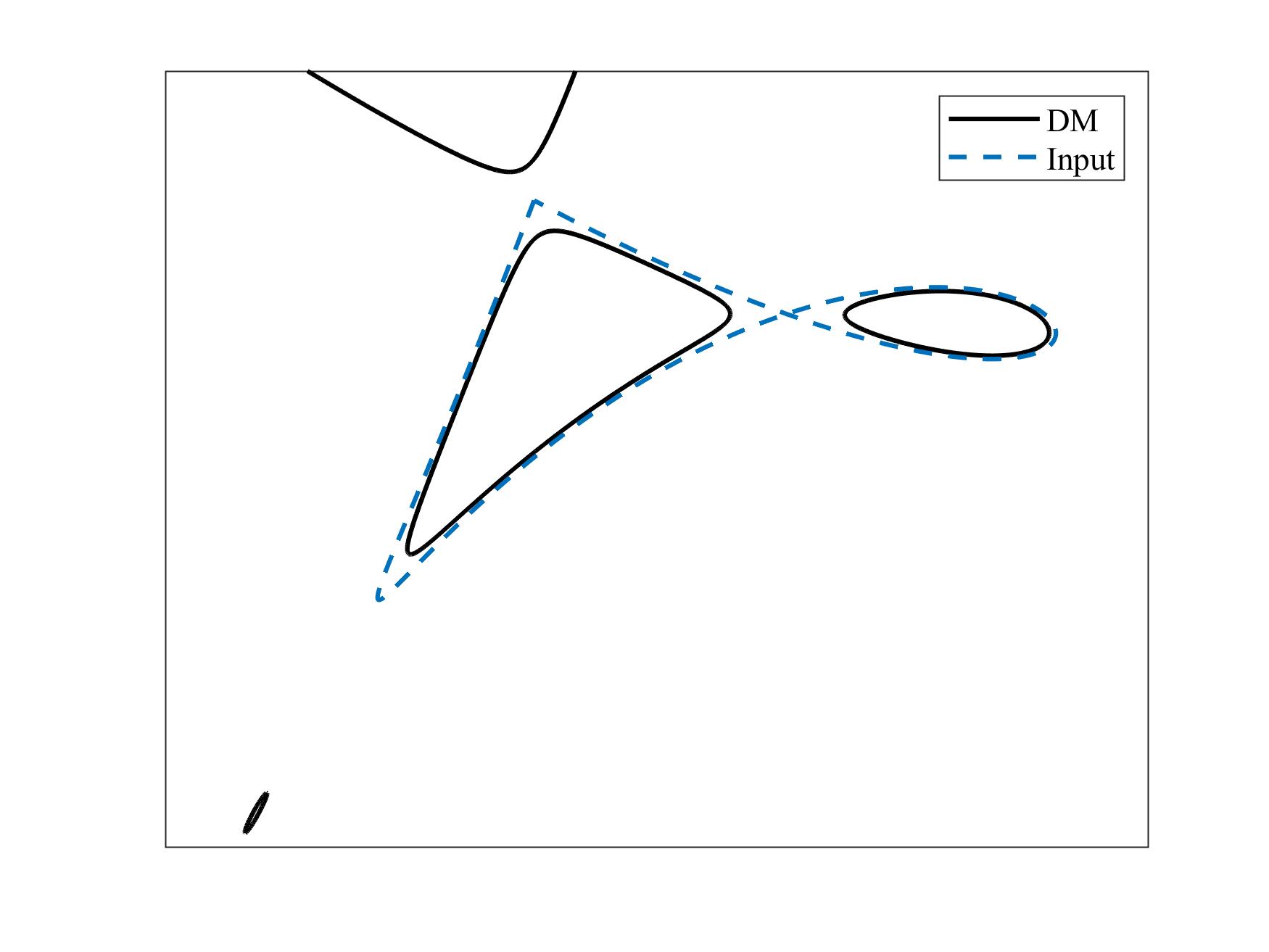}
	\end{minipage}}
	\subfigure[DM ($n=5$)]{\begin{minipage}[c]{0.19\textwidth}
		\centering
		\includegraphics[width=\textwidth]{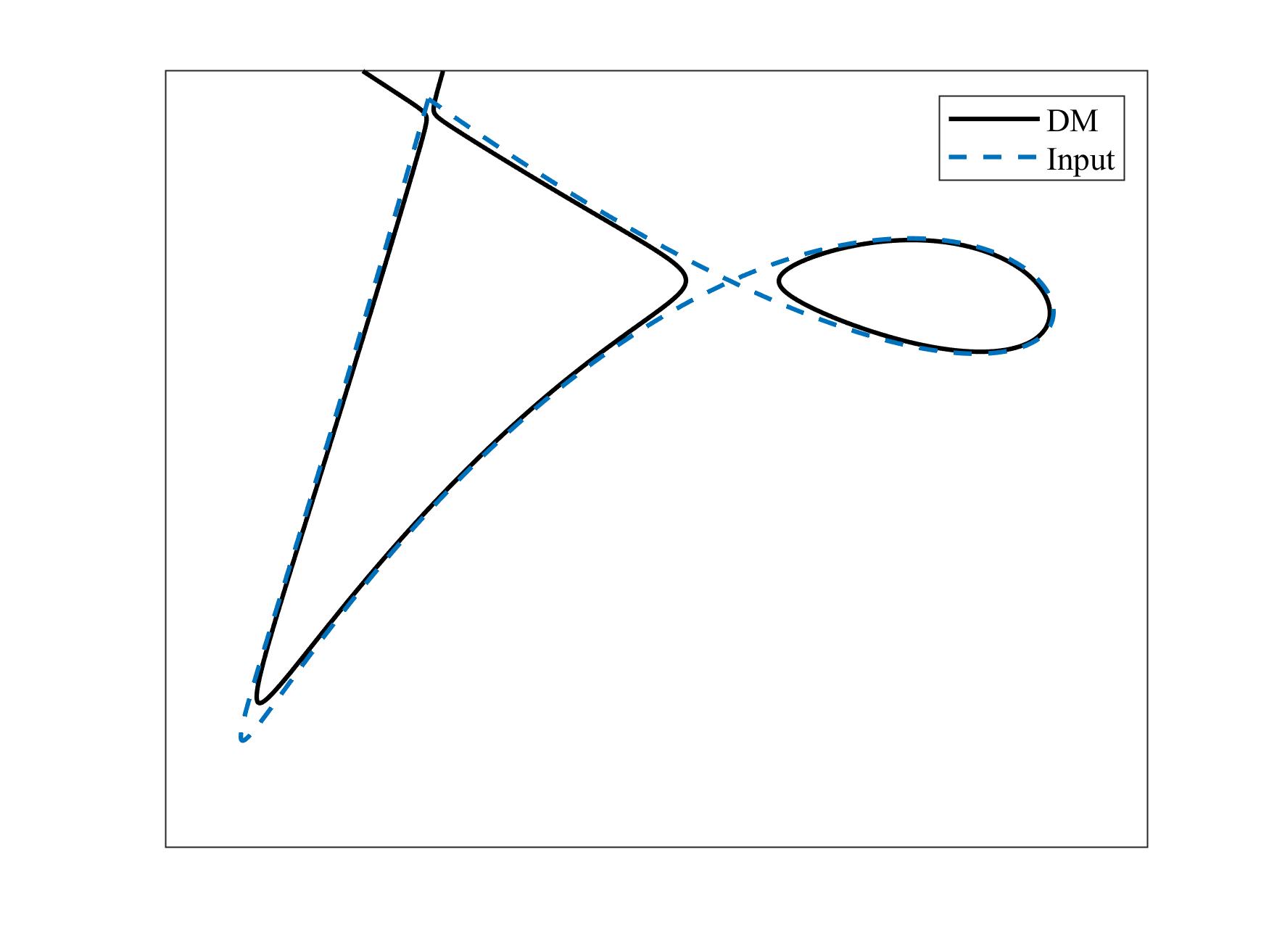}
	\end{minipage}}
	\subfigure[DM ($n=6$)]{\begin{minipage}[c]{0.19\textwidth}
		\centering
		\includegraphics[width=\textwidth]{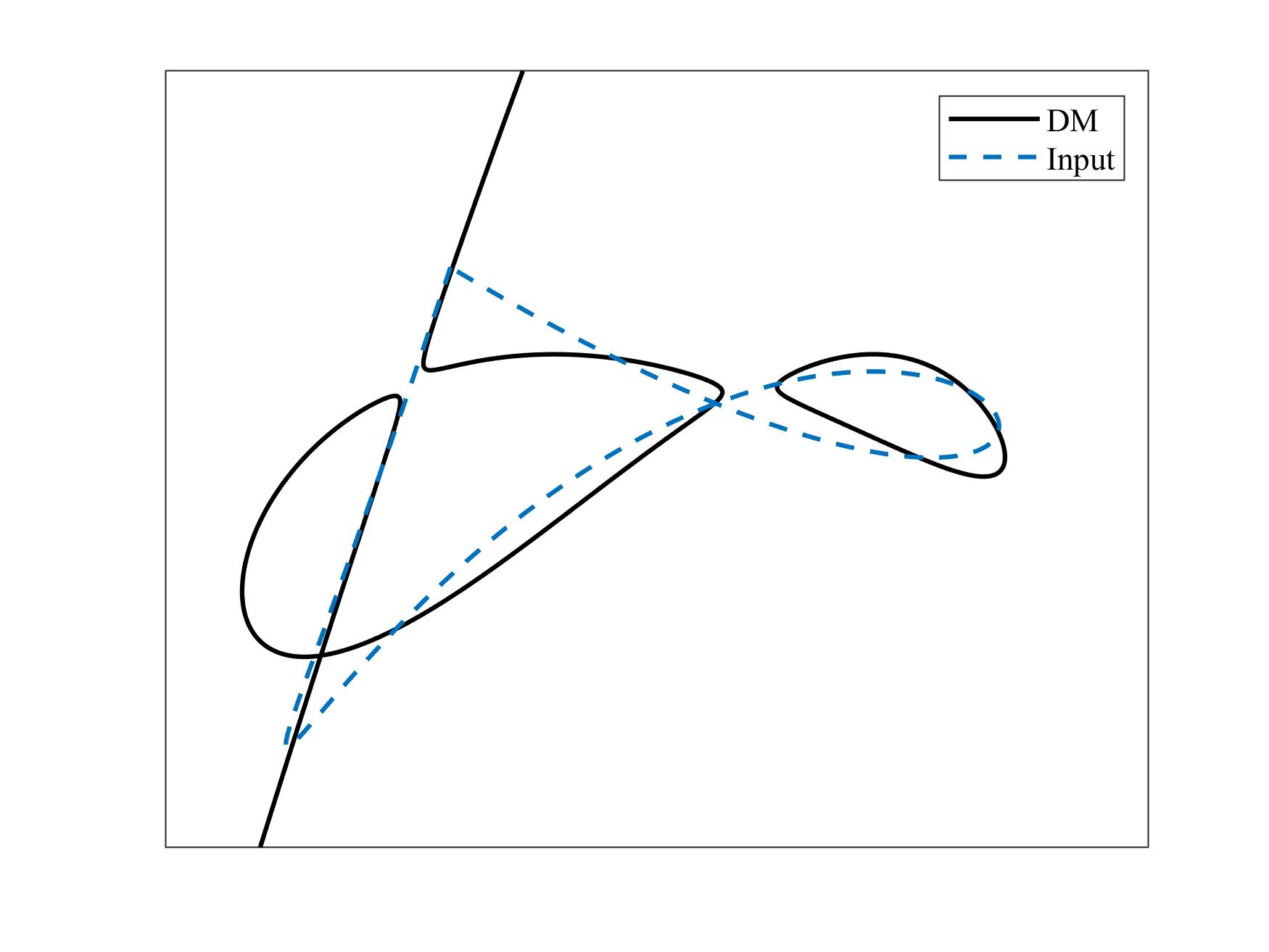}
	\end{minipage}}
	\caption{Adaptive implicitization of $C_2(t)$.
	The blue dash line in (a)-(j) is the input curve, 
	the red line in (a)-(e) is the output curve by our method, 
	and the black line in (f)-(i) is the output curve by Dokken's method. 
	From left to right: the implicit degree $n=2,3,4,5,6$.}
	\label{ex2_output}
\end{sidewaysfigure}
\begin{figure}[h]%
    \centering
    \includegraphics[width=0.5\textwidth]{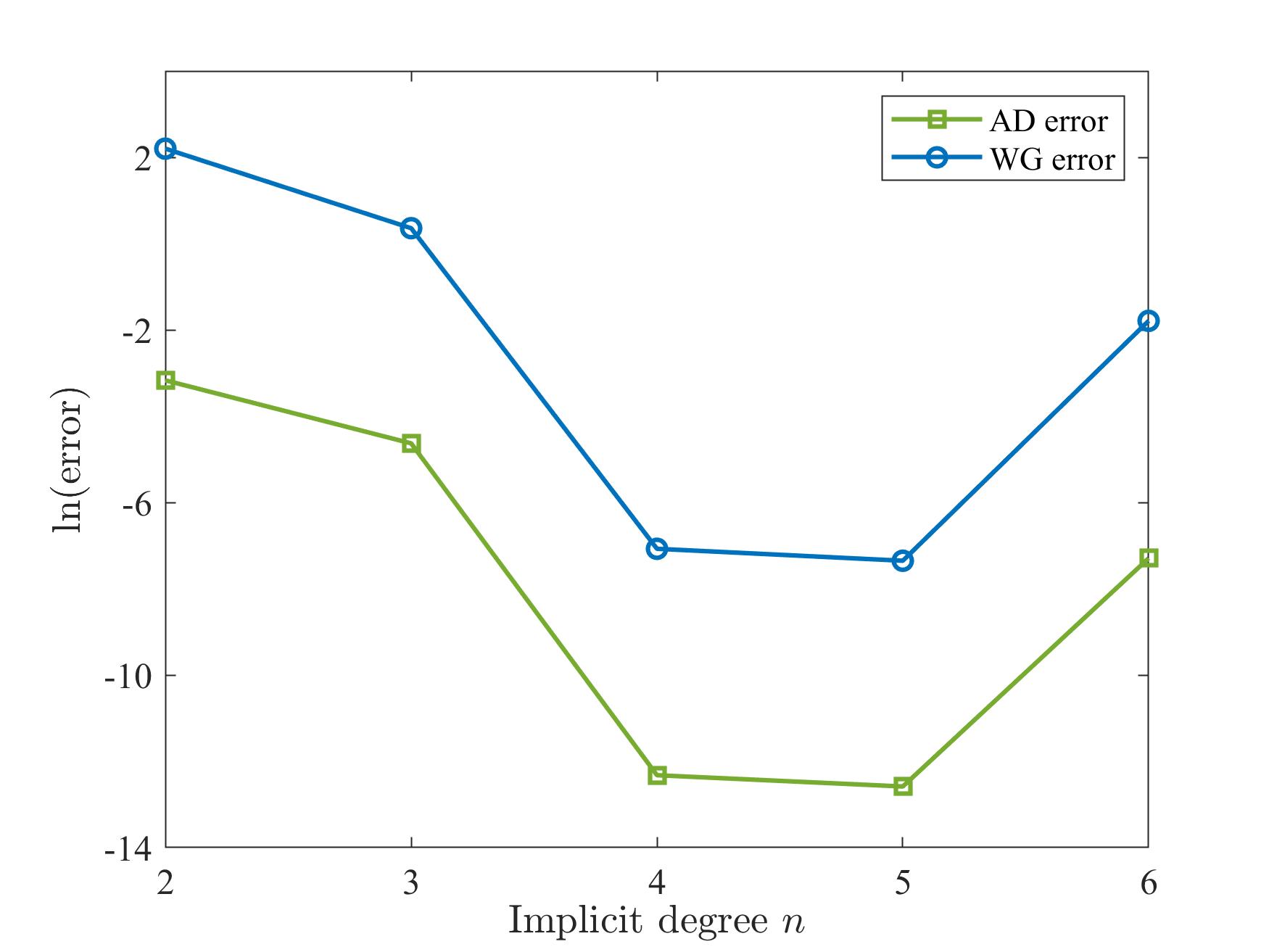}
    \caption{Statistics of our method on changes of the AD and WG error for $C_2(t)$,
    as the implicit degree $n$ increases.}
    \label{error2}
\end{figure} 

Figure \ref{error2} shows the statistic of our method on changes 
of the AD and WG error for $C_2(t)$, when the implicit degree $n$ is increasing.
\end{example}

Finally, Table \ref{tab1} shows the comparison of running time performance of WGM, DM 
and the method in \cite{interian2017curve}.
\begin{table}[h!]
  \begin{center}
    \caption{Performance of WGM on $C_1(t), C_2(t)$ (all timings are measured in seconds).}
    \label{tab1}
    \begin{tabular}{cccccc} 
          \hline
          \hline
Input	& AD Error	& WG Error 
& WGM & DM & Method in \cite{interian2017curve}\\
 	&  	&   
& Time  & Time  & Time \\
      \hline
$C_1(t)$		& $2.312e-16$			& $1.266e-14$  & $0.0019$  & $0.0016$ & $48.71$\\
$C_2(t)$		& $3.727e-6$			& $9.119e-4$  & $0.0030$  & $0.0023$ & $128.81$\\
          \hline
    \end{tabular}
  \end{center}
\end{table}
\section{Methodology for Non-Polynomial Curves}\label{non polynomial method}
To deal with non-polynomial curves, our approach in this section
is to sample a number of points with associated oriented tangent vectors, 
and then to convert them into implicit polynomials by 
discrete approximate implicitization (i.e. implicit fitting).
\subsection{Curve sampling}
Here, we employ the 
uniform sampling, which makes the sample points distribute
uniformly in the parametric space.
All the sampling points and their associated tangent vectors are respectively denoted by 
$\left \{ \mathbf{p}_j \right \}_{j=1}^{N}$ and $\left \{ \mathbf{T}_j \right \}_{j=1}^{N}$.

\subsection{Discrete approximate implicitization}
Discrete approximate implicitization, also called implicit fitting, 
is to retrieve the implicit polynomial
\begin{equation*}
    f_{\mathbf{b}}(x,y) =
    \sum_{i=1}^k b_{i}\phi_i(x,y)
\end{equation*}
from the sampling points $\left \{ \mathbf{p}_i \right \}_{i=1}^{N}$ 
by minimizing the sum of the squared algebraic distances:
\begin{equation}\label{lad_np}
    L_{AD}=\sum_{j=1}^{N} (f_{\mathbf{b}}(\mathbf{p}_j))^2=\left \| \left (f_{\mathbf{b}}(\mathbf{p}_1), f_{\mathbf{b}}(\mathbf{p}_2), \ldots 
,f_{\mathbf{b}}(\mathbf{p}_N) \right )^\top \right \| ^2.
\end{equation}
We simplify $L_{AD}$ into the matrix form. While
\begin{equation*}
    f_{\mathbf{b}}(\mathbf{p}_j)=\left (
  \phi_1(\mathbf{p}_j), \phi_2(\mathbf{p}_j), \cdots, \phi_k(\mathbf{p}_j)\right )
\mathbf{b},~j=1,2,\ldots,N,
\end{equation*}
we have
\begin{equation*}
    \left (f_{\mathbf{b}}(\mathbf{p}_1), f_{\mathbf{b}}(\mathbf{p}_2), \ldots 
,f_{\mathbf{b}}(\mathbf{p}_N) \right )^\top =D_1\mathbf{b},
\end{equation*}
where $D_1=
\left (\phi_i(\mathbf{p}_j)\right )_{N\times k}$.
Then $L_{AD}$ in Equation (\ref{lad_np}) can be written as
\begin{equation}\label{dis_lad}
    L_{AD}=\mathbf{b}^\top A_1\mathbf{b},
\end{equation}
where 
\begin{equation*}
    A_1=D_1^\top D_1.
\end{equation*}

To avoid the trivial $\mathbf{b}=0$ for $L_{AD}$'s minimization, 
we introduce the weak gradient constraint for the discrete case:
\begin{align*}
    L_{WG}&=\sum_{j=1}^{N} 
    \left ( \bigtriangledown f_{\mathbf{b}}(\mathbf{p}_j)\cdot \mathbf{T}_j \right )^2\\
    &=\left \| \left (\bigtriangledown f_{\mathbf{b}}(\mathbf{p}_1)\cdot \mathbf{T}_1, 
    \bigtriangledown f_{\mathbf{b}}(\mathbf{p}_2)\cdot \mathbf{T}_2, \ldots 
,\bigtriangledown f_{\mathbf{b}}(\mathbf{p}_N)\cdot \mathbf{T}_N \right )^\top \right \| ^2.
\end{align*}
where $\bigtriangledown f_{\mathbf{b}}(\mathbf{p}_j)$ is the 
gradient vector of $f_{\mathbf{b}}(x,y)$ in any point $\mathbf{p}_j$. 
The role of $L_{WG}$ is to keep $\mathbf{p}_j$'s tangent vector and the gradient vector of 
the implicit polynomial $f_{\mathbf{b}}$ being perpendicular.
\begin{theorem}
The weak gradient constraint $L_{WG}$ can be written in 
a homogeneous quadratic form of $\mathbf{b}$.
\end{theorem}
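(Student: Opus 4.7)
The plan is to mimic exactly the matrix factorization already used for $L_{AD}$ in Equation (\ref{dis_lad}), using the observation that each inner product $\nabla f_{\mathbf{b}}(\mathbf{p}_j)\cdot \mathbf{T}_j$ depends linearly on the coefficient vector $\mathbf{b}$. Since $f_{\mathbf{b}}(x,y)=\sum_{i=1}^{k}b_i\phi_i(x,y)$, linearity of the gradient operator gives $\nabla f_{\mathbf{b}}(x,y)=\sum_{i=1}^{k}b_i\nabla \phi_i(x,y)$, and taking the Euclidean inner product with $\mathbf{T}_j$ preserves this linear dependence on $\mathbf{b}$.

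First I would write out, for a fixed sample index $j$, the scalar
\begin{equation*}
\nabla f_{\mathbf{b}}(\mathbf{p}_j)\cdot \mathbf{T}_j
=\sum_{i=1}^{k}b_i\,\bigl(\nabla \phi_i(\mathbf{p}_j)\cdot \mathbf{T}_j\bigr),
\end{equation*}
which displays the expression as a row-vector times $\mathbf{b}$. Stacking these $N$ scalars into a column vector produces an identity of the form
\begin{equation*}
\bigl(\nabla f_{\mathbf{b}}(\mathbf{p}_1)\cdot \mathbf{T}_1,\ldots,\nabla f_{\mathbf{b}}(\mathbf{p}_N)\cdot \mathbf{T}_N\bigr)^\top = D_2\mathbf{b},
\end{equation*}
where $D_2$ is the $N\times k$ matrix with entries $(D_2)_{j,i}=\nabla \phi_i(\mathbf{p}_j)\cdot \mathbf{T}_j$. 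Note that $D_2$ depends only on the input sampling data (points and tangents) and the fixed basis $\{\phi_i\}$, not on $\mathbf{b}$, so it is a well-defined constant matrix.

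Then, by the definition of the squared Euclidean norm, I can conclude
\begin{equation*}
L_{WG}=\|D_2\mathbf{b}\|^2=\mathbf{b}^\top D_2^\top D_2\,\mathbf{b}=\mathbf{b}^\top A_2\mathbf{b}, \qquad A_2:=D_2^\top D_2,
\end{equation*}
which is the desired homogeneous quadratic form. A closing sentence should record that $A_2$ is automatically symmetric positive semidefinite as a Gram matrix, matching the structure of $A_1$ in Equation (\ref{dis_lad}) and ensuring that the combined objective $L_{AD}+\lambda L_{WG}=\mathbf{b}^\top(A_1+\lambda A_2)\mathbf{b}$ can again be minimized over $\|\mathbf{b}\|=1$ by an eigenvalue/eigenvector computation.

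There is no real obstacle here; the only thing to be careful about is bookkeeping of dimensions (each $\nabla \phi_i(\mathbf{p}_j)$ is a $2$-vector, each $\mathbf{T}_j$ is a $2$-vector, their dot product is a scalar, and $D_2$ ends up $N\times k$), and to emphasize that linearity in $\mathbf{b}$ is what reduces the apparently nonlinear expression $(\nabla f_{\mathbf{b}}(\mathbf{p}_j)\cdot \mathbf{T}_j)^2$ to a quadratic form. The argument is a direct discrete analogue of the polynomial-curve proof used earlier, only with the Gram matrix $G_{\alpha}$ replaced by the identity arising from the sum-of-squares norm in Equation (\ref{lad_np}).
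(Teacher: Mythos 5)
Your argument is correct and follows the same route as the paper: expand $\nabla f_{\mathbf{b}}(\mathbf{p}_j)\cdot\mathbf{T}_j$ linearly in $\mathbf{b}$, stack the $N$ scalars into $D_2\mathbf{b}$, and read off $L_{WG}=\mathbf{b}^\top D_2^\top D_2\mathbf{b}$. The only difference is that you make the linearity explicit by writing $(D_2)_{j,i}=\nabla\phi_i(\mathbf{p}_j)\cdot\mathbf{T}_j$, where the paper merely asserts that each inner product is a linear combination of $\mathbf{b}$; this added detail is a small improvement but not a different proof.
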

\begin{proof}
Notice that each inner product 
$\bigtriangledown f_{\mathbf{b}}(\mathbf{p}_j)\cdot \mathbf{T}_j$
can be represent as a linear combination of $\mathbf{b}$, $j=1,2,\ldots,N$.
Then
\begin{equation*}
    \left (\bigtriangledown f_{\mathbf{b}}(\mathbf{p}_1)\cdot \mathbf{T}_1, 
    \bigtriangledown f_{\mathbf{b}}(\mathbf{p}_2)\cdot \mathbf{T}_2, \ldots 
,\bigtriangledown f_{\mathbf{b}}(\mathbf{p}_N)\cdot \mathbf{T}_N \right )^\top
=D_2\mathbf{b},
\end{equation*}
where $D_2$ 
is the collocation matrix whose rows are the coefficients of 
$\bigtriangledown f_{\mathbf{b}}(\mathbf{p}_j)\cdot \mathbf{T}_j$'s linear expressions.
Afterwards, $L_{WG}$ can be written in the matrix notation as
\begin{equation}\label{dis_lwg}
    L_{WG}=\mathbf{b}^\top A_2\mathbf{b},
\end{equation}
where 
\begin{equation*}
    A_2=D_2^\top D_2.
\end{equation*}
\end{proof}

Finally, due to the Equation (\ref{dis_lad}) and (\ref{dis_lwg}),
the discrete approximate
implicitization is found by minimizing 
the positive semidefinite quadratic objective function
\begin{equation}
    L_{\lambda,n}(\mathbf{b})=
    L_{AD} +\lambda L_{WG}
    =\mathbf{b}^\top (A_1+ \lambda A_2) \mathbf{b}
\end{equation}
over the coefficients $\mathbf{b}$ of
$f_{\mathbf{b} }(x,y)$,  while keeping the 
degree $n$ of $f_{\mathbf{b} }(x,y)$ fixed. 
Denote by $\mathbf{b}_{WGM}$ the unit eigenvector 
corresponding to the smallest eigenvalue of
\begin{equation*}
    A=A_1+ \lambda A_2,
\end{equation*}
then $\mathbf{b}_{WGM}$ is the solution for minimizing 
$L_{\lambda,n}(\mathbf{b})$
subject to $\left \| \mathbf{b} \right \| =1$.
\subsection{Adaptive implicitization algorithm}
The weak gradient method (WGM for short) of adaptive implicitization
for non-polynomial curves is summarized in Algorithm 2. 
Algorithm 2 is almost identical to
Algorithm 1, with three changes:
\begin{itemize}
    \item the curve sampling is employed in the first place (line 1);
    \item the AD and WG matrices $A_1^{(n)}, A_2^{(n)}$ are 
    constructed without the Gram matrix $G_{\alpha}$ (line 6);
    \item since the change trend of the WG’s loss for non-polynomial 
    curves is more subtle (see Figure \ref{error4} for example),
    the stopping criterion of $\epsilon_{WG}$ is replaced
    by $e_2^{(n)} \le \epsilon_{WG}$ (line 14),
    to check whether the WG's loss satisfies our default precision.
\end{itemize}
\begin{algorithm}
\caption{WGM for non-polynomial curves}\label{alg:two}
\KwData{The parametric curve $\mathbf{p}(t)$;
the maximum implicit degree $n_{\max}$;
the stopping thresholds $\epsilon_{AD},\epsilon_{WG}$.}
\KwResult{The coefficient vector $\mathbf{b}_{WGM}$ of the implicit 
polynomial $f_{\mathbf{b}}$.}
Uniform sampling on $\mathbf{p}(t)$\;
$n\leftarrow 1$\;
\While{$n\leq n_{\max}$}{
Construct the collocation matrix $D_1^{(n)}$ 
of the AD constraint\;
Construct the collocation matrix $D_2^{(n)}$ 
of the WG constraint\;
$A_1^{(n)}, A_2^{(n)}\leftarrow 
(D_1^{(n)})^\top D_1^{(n)}, (D_2^{(n)})^\top D_2^{(n)}$\;
$A^{(n)}\leftarrow A_1^{(n)}+\lambda A_2^{(n)}$\;
$\mathbf{b}^{(n)}\leftarrow$ the unit eigenvector 
corresponding to the smallest eigenvalue of $A^{(n)}$\;
$e_1^{(n)}, e_2^{(n)}\leftarrow
(\mathbf{b}^{(n)})^\top A_1^{(n)}\mathbf{b}^{(n)},
(\mathbf{b}^{(n)})^\top A_2^{(n)}\mathbf{b}^{(n)}$\;
  \eIf{$n=n_{\max}$}{
  $\mathbf{b}_{WGM}\leftarrow\mathbf{b}^{(n)}$\;
  \Return{} 
  }{\If{$e_1^{(n)}\le\epsilon_{AD}$
and $e_2^{(n)}
\le \epsilon_{WG}$}{
  $\mathbf{b}_{WGM}\leftarrow\mathbf{b}^{(n)}$\;
  \Return{}
    }
  }
$n\leftarrow n+1$\;
}
\end{algorithm}

\subsection{Experiments}

We choose the basis $\left \{ \phi_i(x,y) \right \}_{i=1}^{k}$ to be 
the bivariate monomial basis of total degree $n$.
We set the maximum implicit degree $n_{\max}=7$, the regulator gain $\lambda=0.01$, and the thresholds $\epsilon_{AD}=10^{-2}$ and $\epsilon_{WG}=10^{-1}$.

\begin{example}
Consider the non-polynomial parametric curve
\begin{equation*}
C_3(t)=\left ( \begin{matrix}
 2(1+\cos t)\cos t\\
2(1+\cos t)\sin t
\end{matrix} \right ),
\end{equation*}
where the parameter of $C_3(t)$ take values in 
$[0, 10]$.
$C_3(t)$ is shown in Figure \ref{input_ex3}.
\begin{figure}[h]%
\centering
\includegraphics[width=0.5\textwidth]{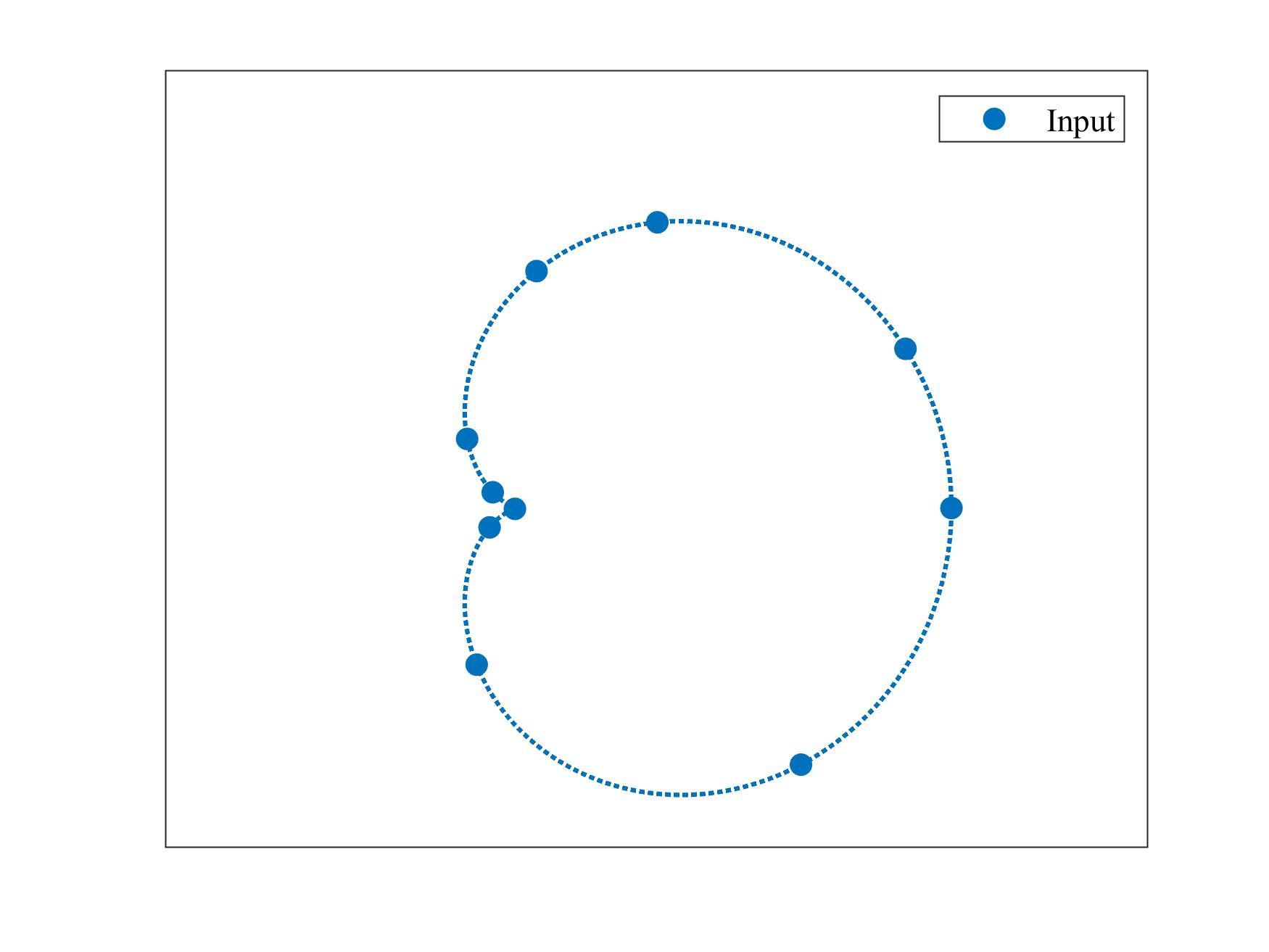}
\caption{The input curve $C_3(t)$ and $10$ points are sampled uniformly from $C_3(t)$.}\label{input_ex3}
\end{figure}
\begin{sidewaysfigure}[htbp]
	\centering
	\subfigure[WGM ($n=3$)]{\begin{minipage}[c]{0.25\textwidth}
		\centering
		\includegraphics[width=\textwidth]{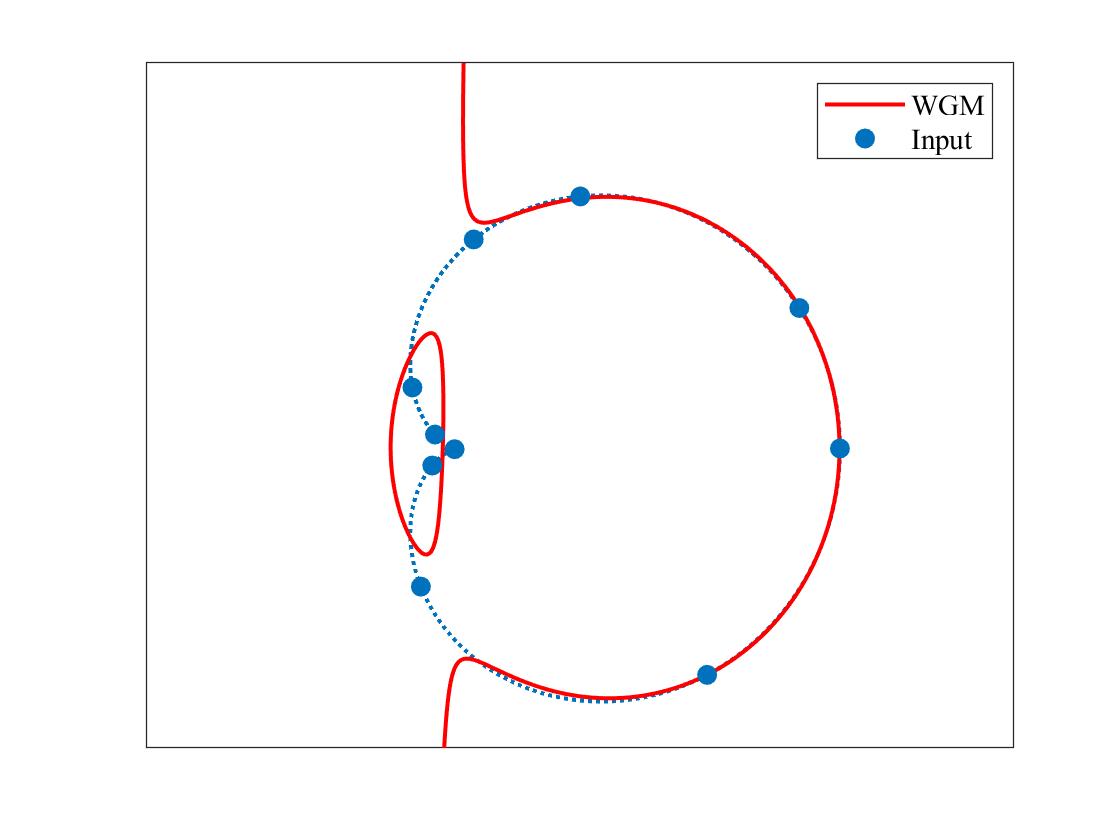}
	\end{minipage}}
	 \subfigure[WGM ($n=4$)]{\begin{minipage}[c]{0.25\textwidth}
		\centering
		\includegraphics[width=\textwidth]{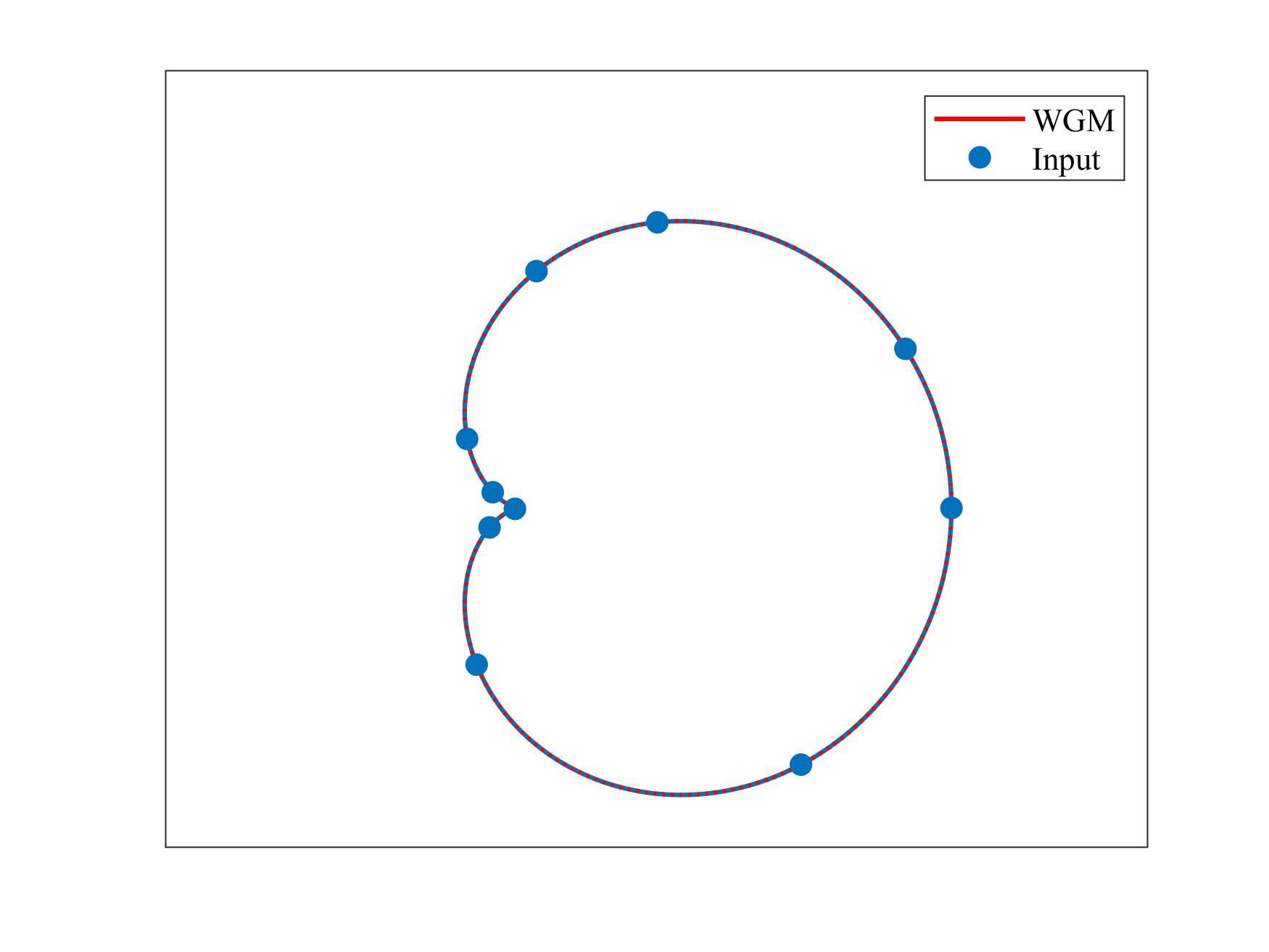}
	\end{minipage} }
	\subfigure[WGM ($n=5$)]{\begin{minipage}[c]{0.25\textwidth}
		\centering
		\includegraphics[width=\textwidth]{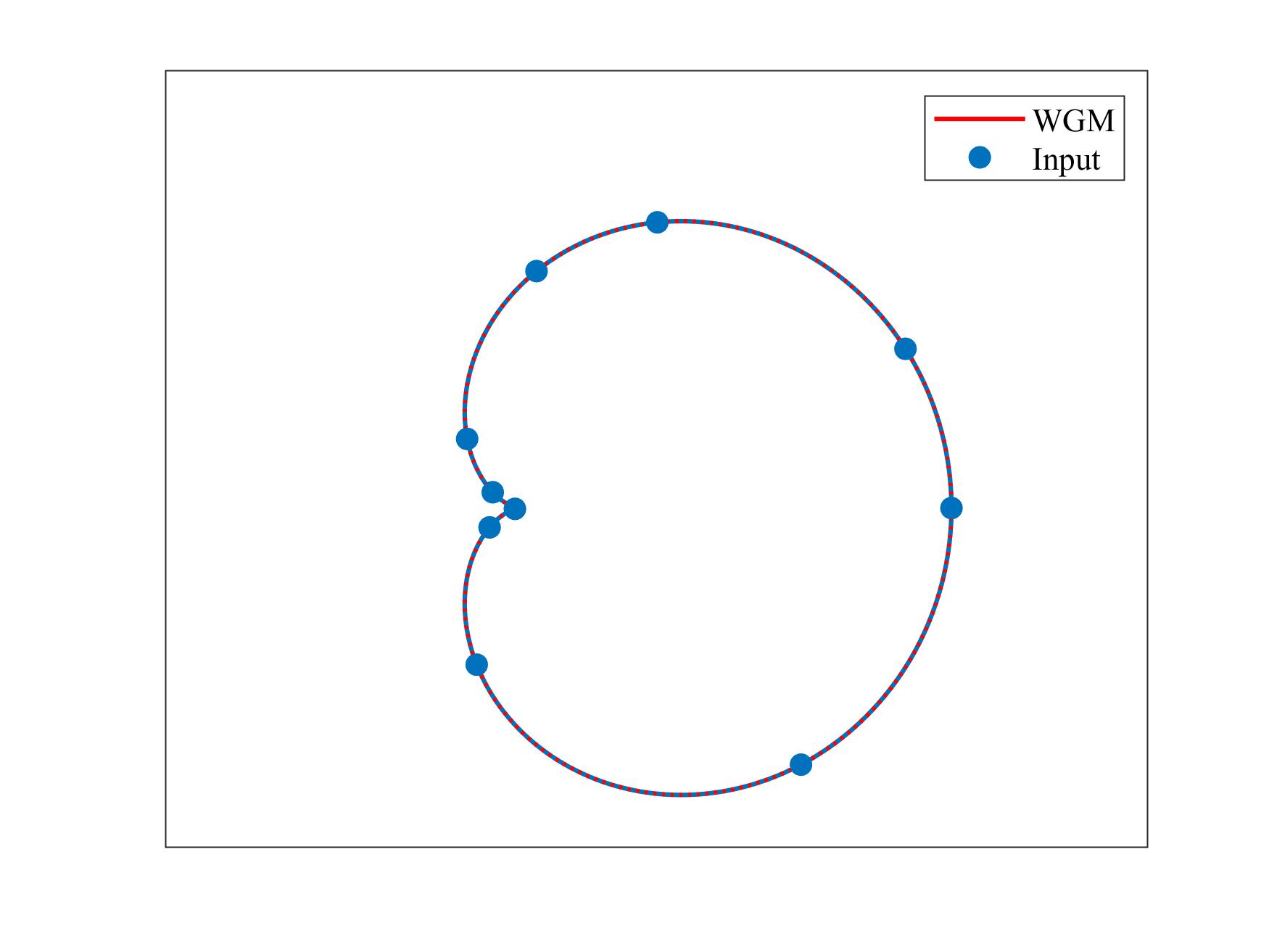}
	\end{minipage}}
	\\
	\subfigure[DM ($n=3$)]{\begin{minipage}[c]{0.25\textwidth}
		\centering
		\includegraphics[width=\textwidth]{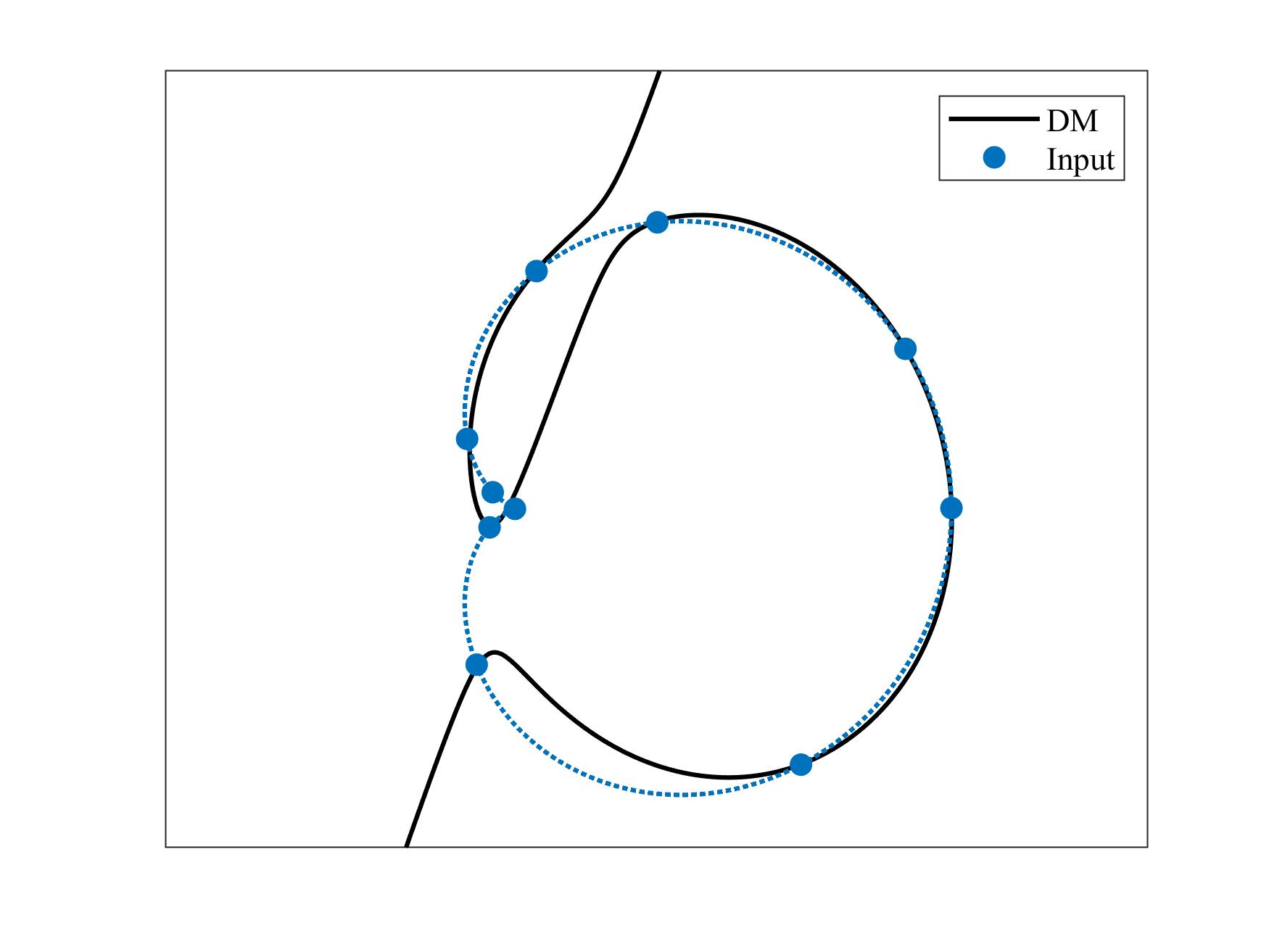}
	\end{minipage}}
	 \subfigure[DM ($n=4$)]{\begin{minipage}[c]{0.25\textwidth}
		\centering
		\includegraphics[width=\textwidth]{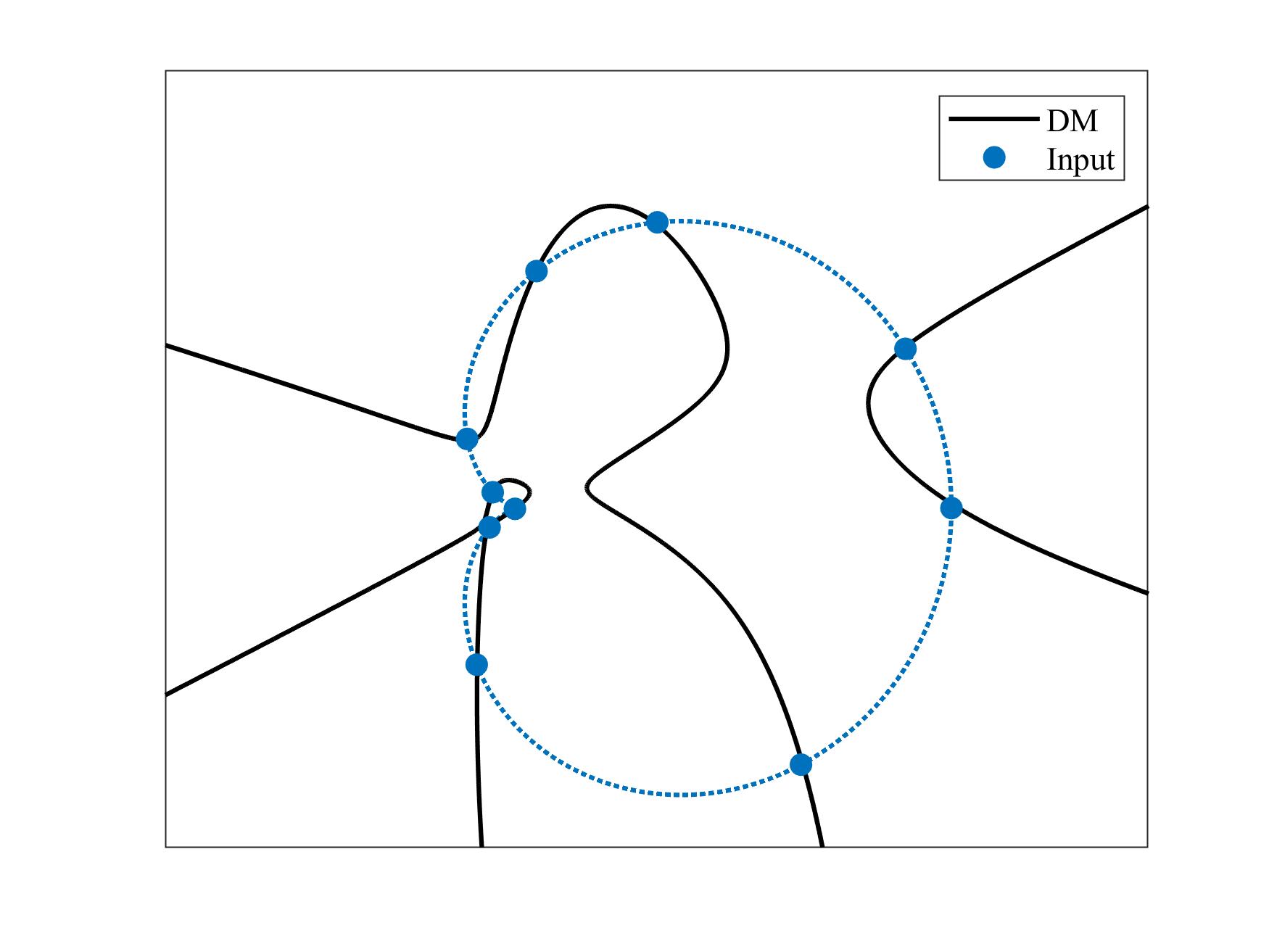}
	\end{minipage} }
	\subfigure[DM ($n=5$)]{\begin{minipage}[c]{0.25\textwidth}
		\centering
		\includegraphics[width=\textwidth]{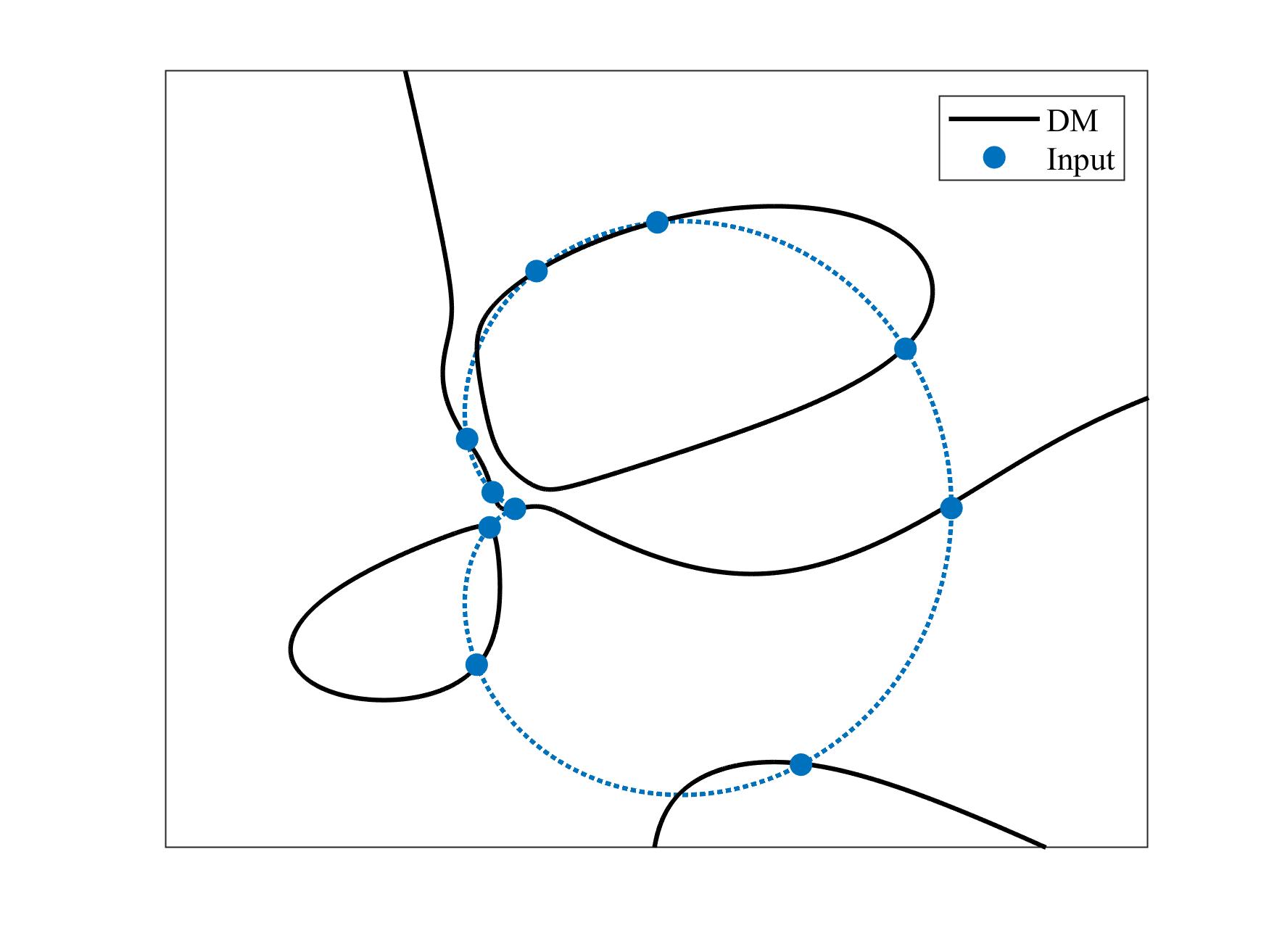}
	\end{minipage}}
	\caption{Adaptive implicitization of $C_3(t)$.
	The blue dash line in (a)-(f) is the input curve, 
	the red line in (a)-(c) is the output curve by our method, 
	and the black line in (d)-(f) is the output curve by Dokken's method. 
	From left to right: the implicit degree $n=3,4,5$.}
	\label{ex3_output}
\end{sidewaysfigure}

\begin{figure}
    \centering
    \includegraphics[width=0.5\textwidth]{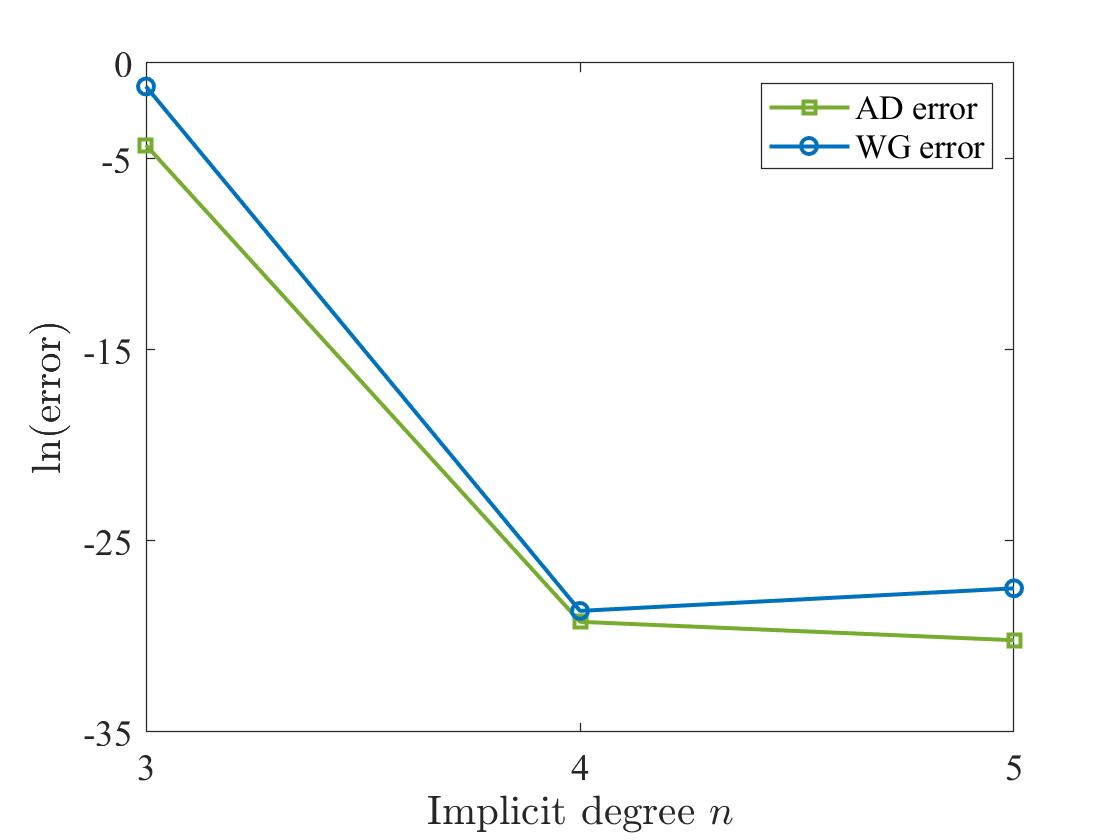}
    \caption{Statistics of our method on changes of the AD and WG error for $C_3(t)$,
    as the implicit degree $n$ increases.}
    \label{error3}
\end{figure} 

The first row of Figure \ref{ex3_output} shows the adaptive implicitization process 
of $C_3(t)$ by the WGM. Similarly, the second row of 
Figure \ref{ex3_output} shows the implicitization process
of $C_3(t)$ by Dokken's method.
We can see that for every iteration (i.e. implicit degree $n$), the WGM's output 
curve will approach $C_3(t)$ closer than that of DM, from the viewpoint of "shape-preserving".

Figure \ref{error3} shows the statistic of our method on changes 
of the AD and WG error for $C_3(t)$, when the implicit degree $n$ is increasing.
\end{example}
\begin{figure}[h]%
\centering
\includegraphics[width=0.5\textwidth]{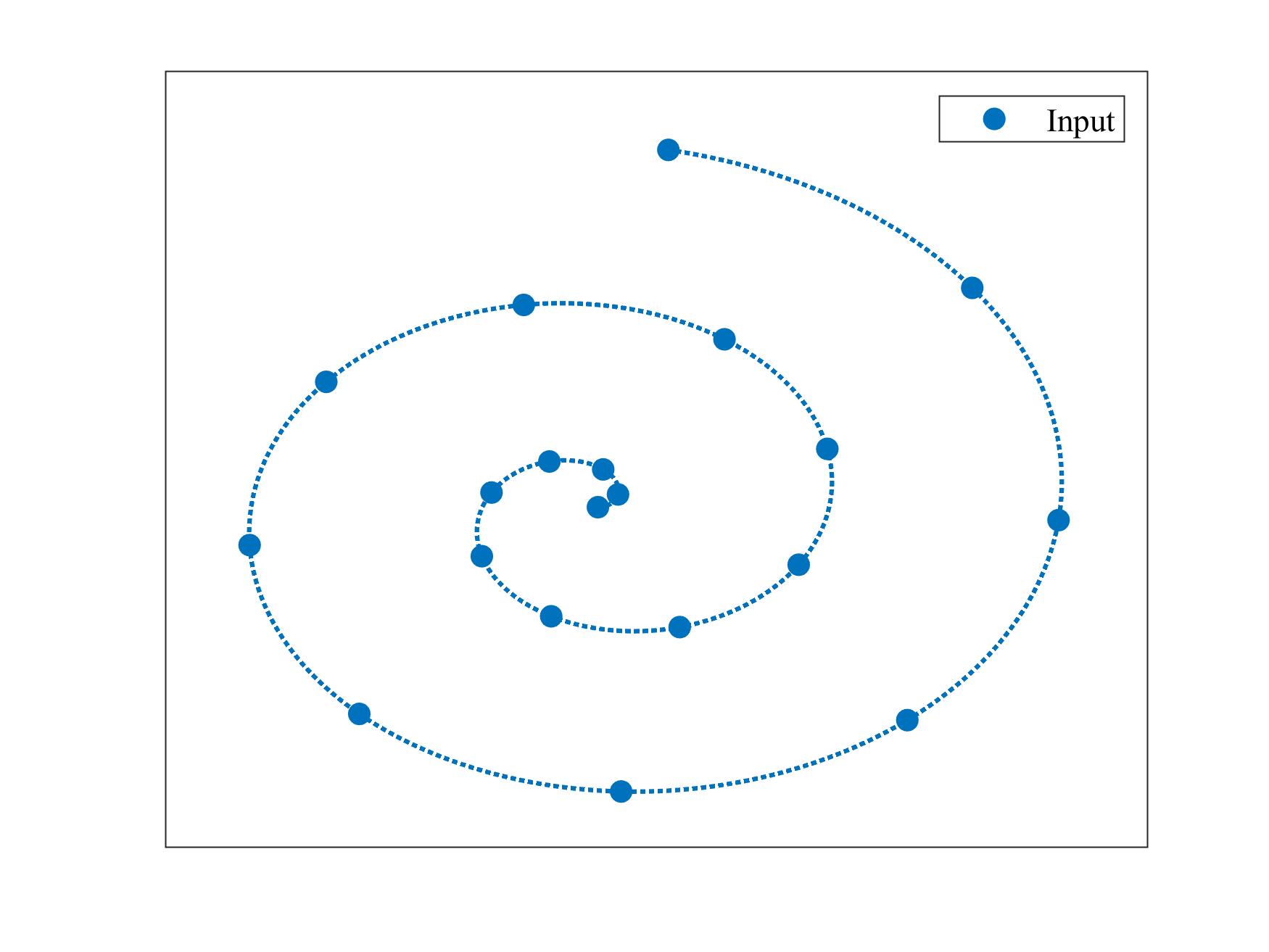}
\caption{The input curve $C_4(t)$ and $20$ points are sampled uniformly from $C_4(t)$.}\label{input_ex4}
\end{figure}
\begin{example}
Consider the non-polynomial parametric curve
\begin{equation*}
C_4(t)=\left ( \begin{matrix}
 t\cos t\\
t\sin t
\end{matrix} \right ),
\end{equation*}
where the parameters of $C_4(t)$ take values in $[0, 14]$.
$C_4(t)$ is shown in Figure \ref{input_ex4}.

The first row of Figure \ref{ex4_output} shows the adaptive implicitization process 
of $C_4(t)$ by the WGM. Similarly, the second row of 
Figure \ref{ex4_output} shows the implicitization process
of $C_4(t)$ by Dokken's method.
We can see that for every iteration, the WGM refrains from additional branches as much as possible. 

Figure \ref{error4} shows the statistic of our method on changes 
of the AD and WG error for $C_4(t)$, when the implicit degree $n$ is increasing.
\end{example}
\begin{sidewaysfigure}
	\subfigure[WGM ($n=3$)]{\begin{minipage}[c]{0.19\textwidth}
		\centering
		\includegraphics[width=\textwidth]{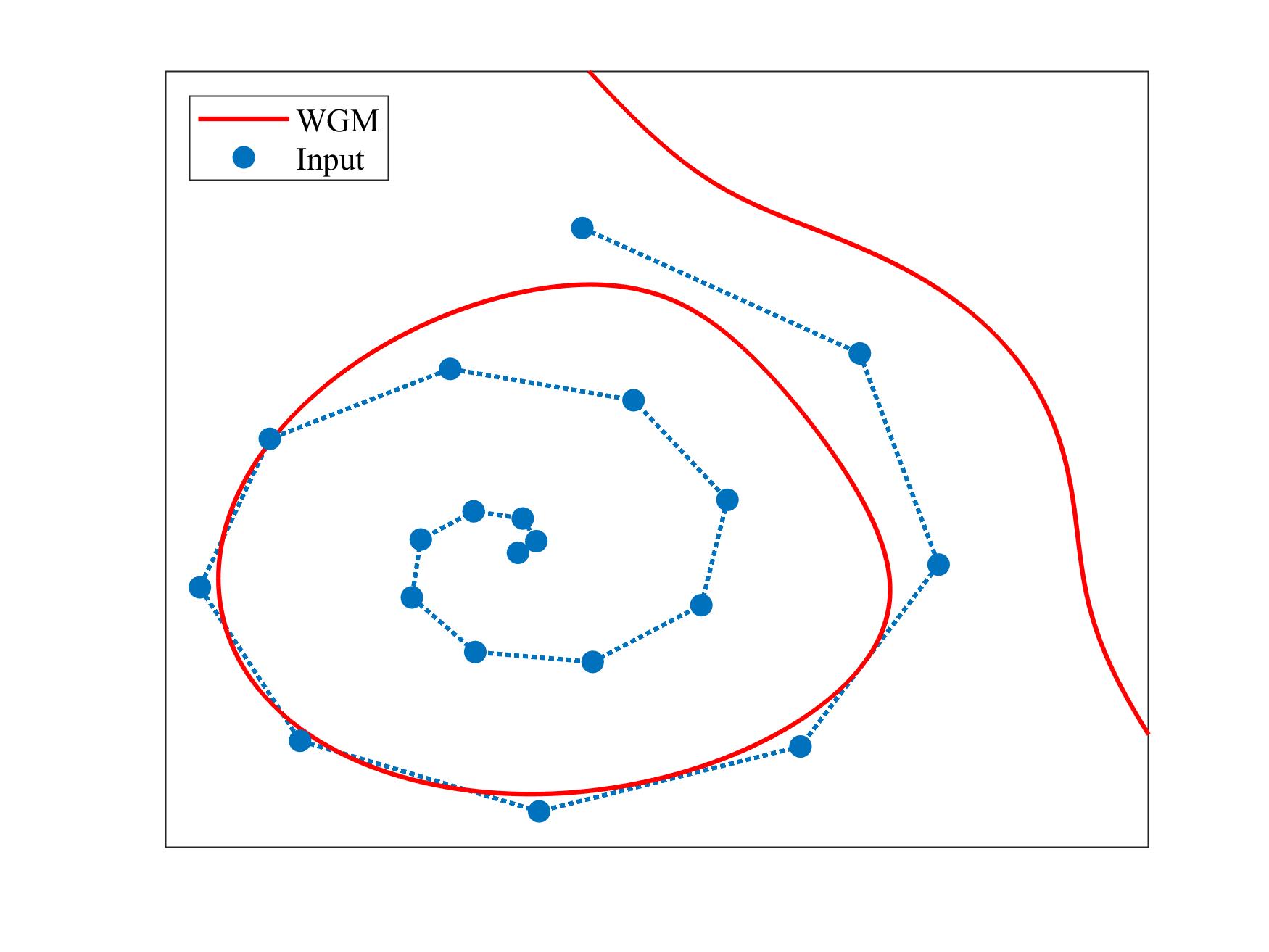}
	\end{minipage}}
	\subfigure[WGM ($n=4$)]{\begin{minipage}[c]{0.19\textwidth}
		\centering
		\includegraphics[width=\textwidth]{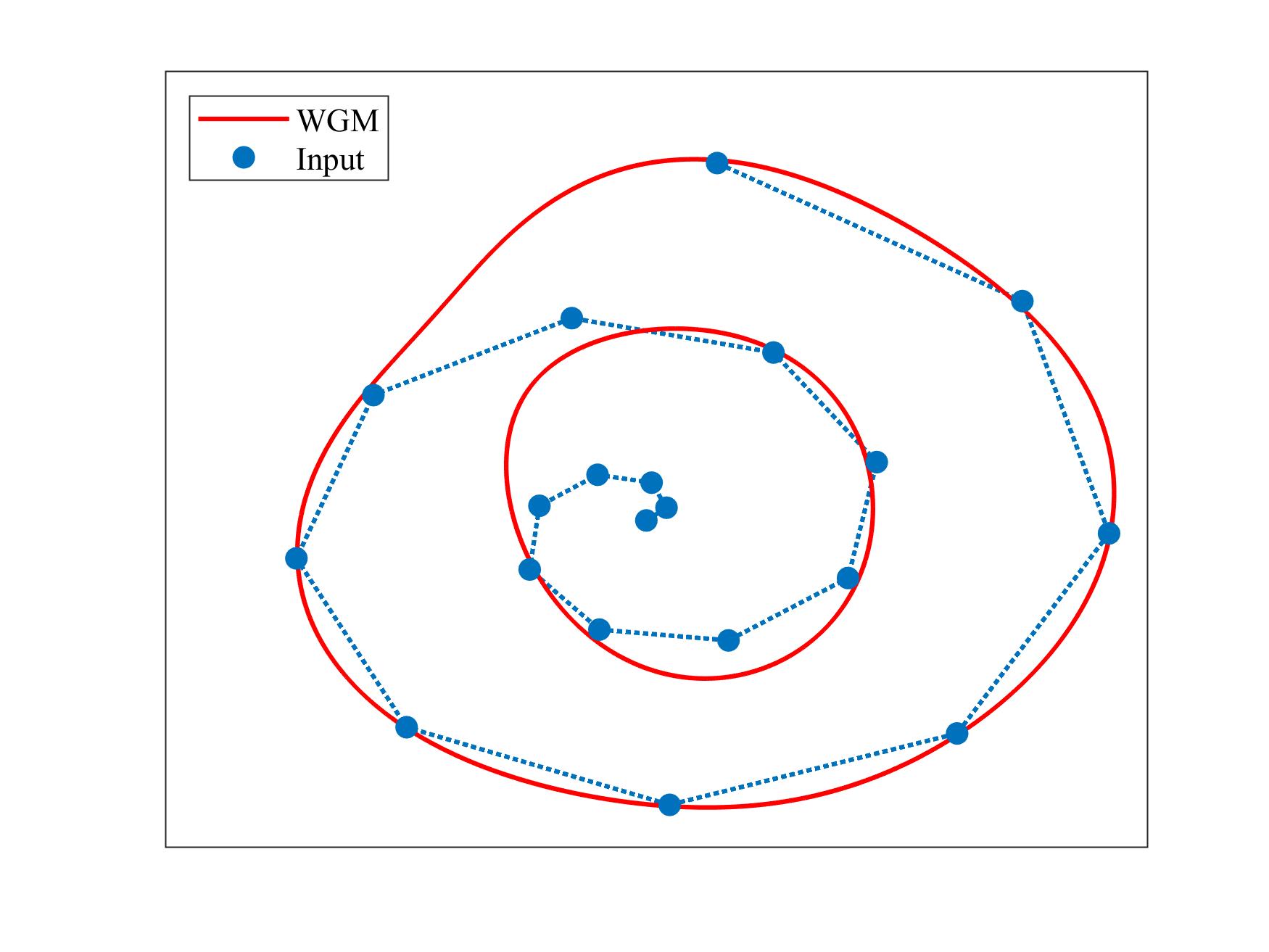}
	\end{minipage}}
	\subfigure[WGM ($n=5$)]{\begin{minipage}[c]{0.19\textwidth}
		\centering
		\includegraphics[width=\textwidth]{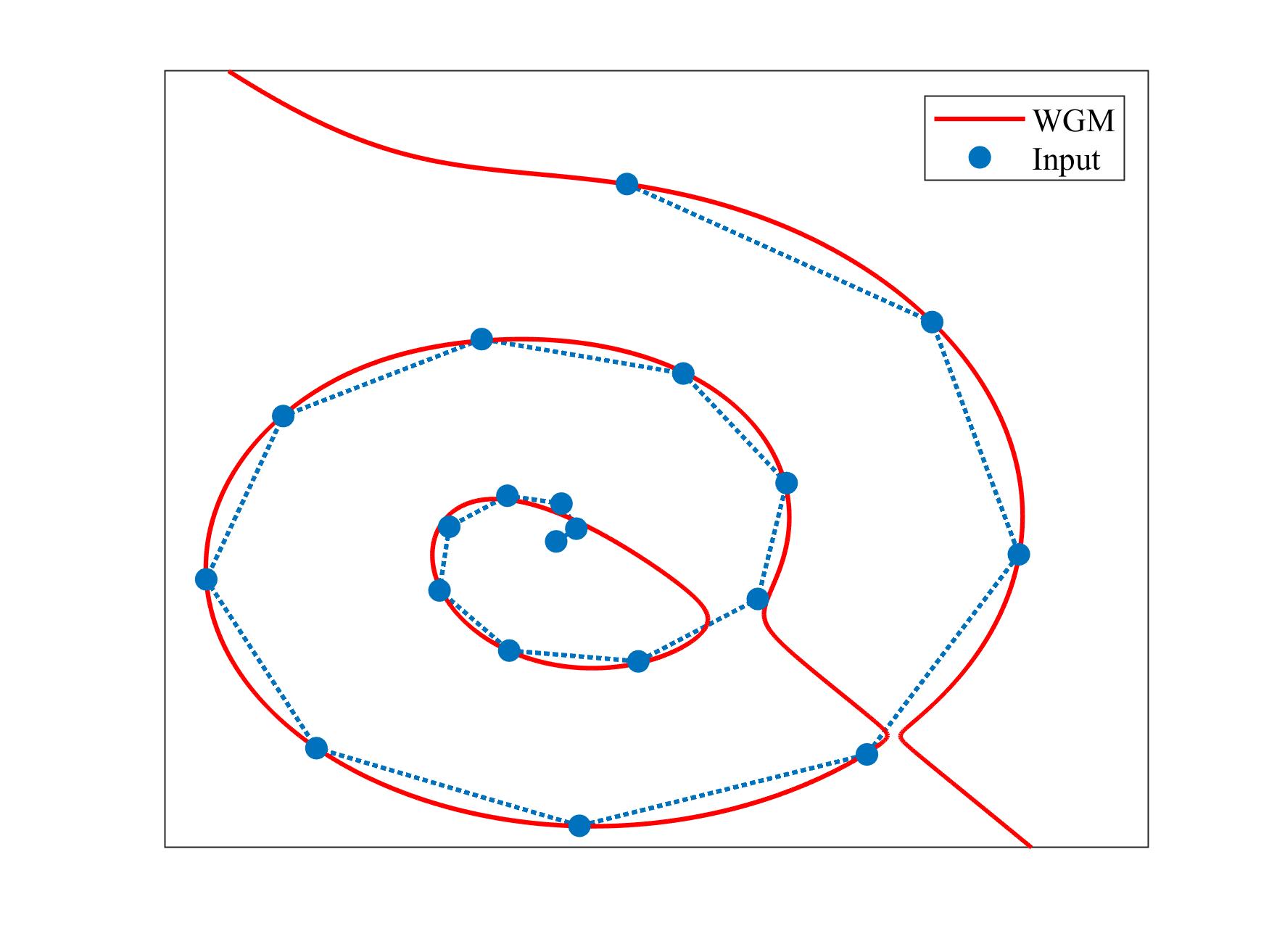}
	\end{minipage}}
	\subfigure[WGM ($n=6$)]{\begin{minipage}[c]{0.19\textwidth}
		\centering
		\includegraphics[width=\textwidth]{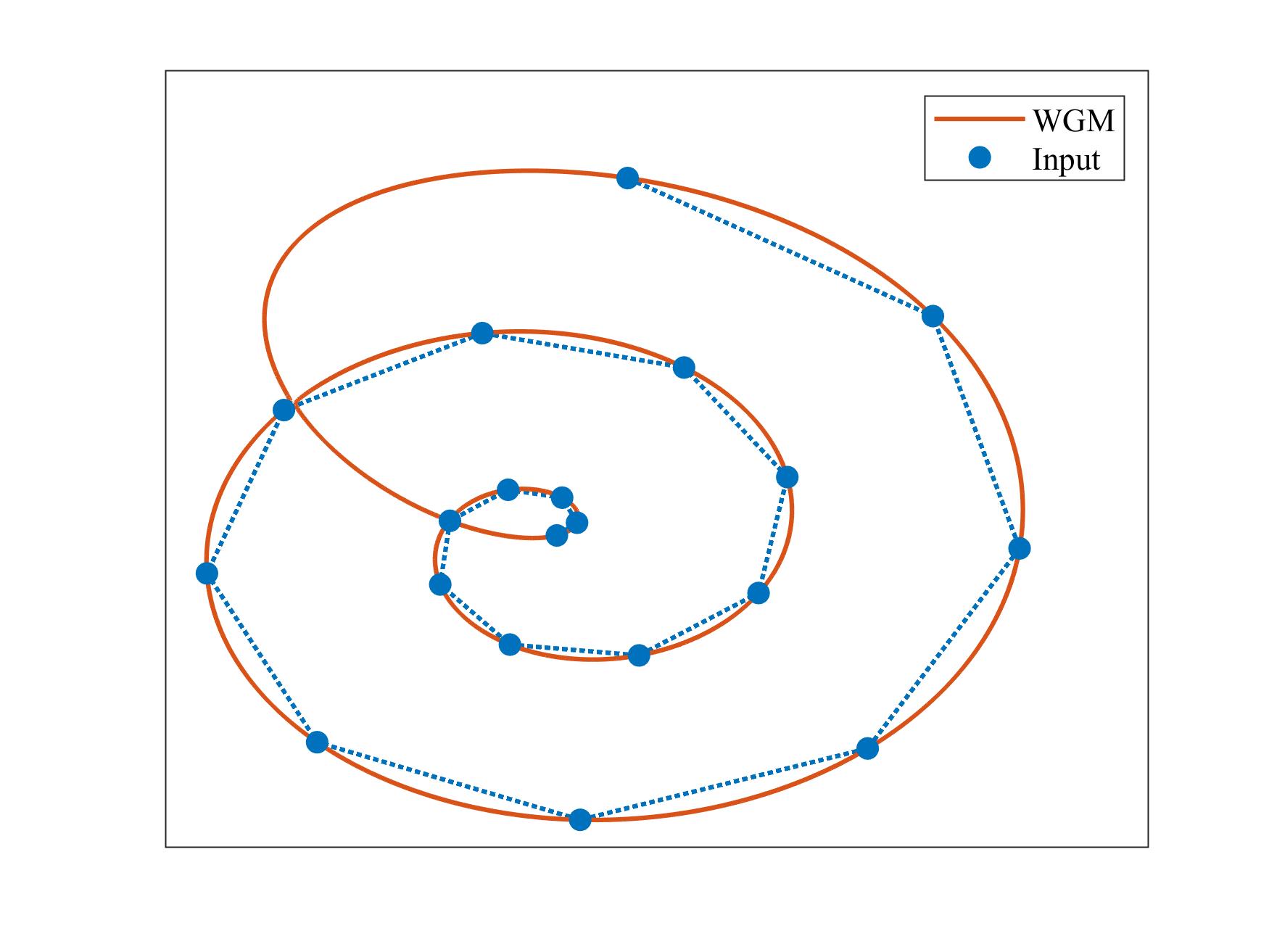}
	\end{minipage}}
	\subfigure[WGM ($n=7$)]{\begin{minipage}[c]{0.19\textwidth}
		\centering
		\includegraphics[width=\textwidth]{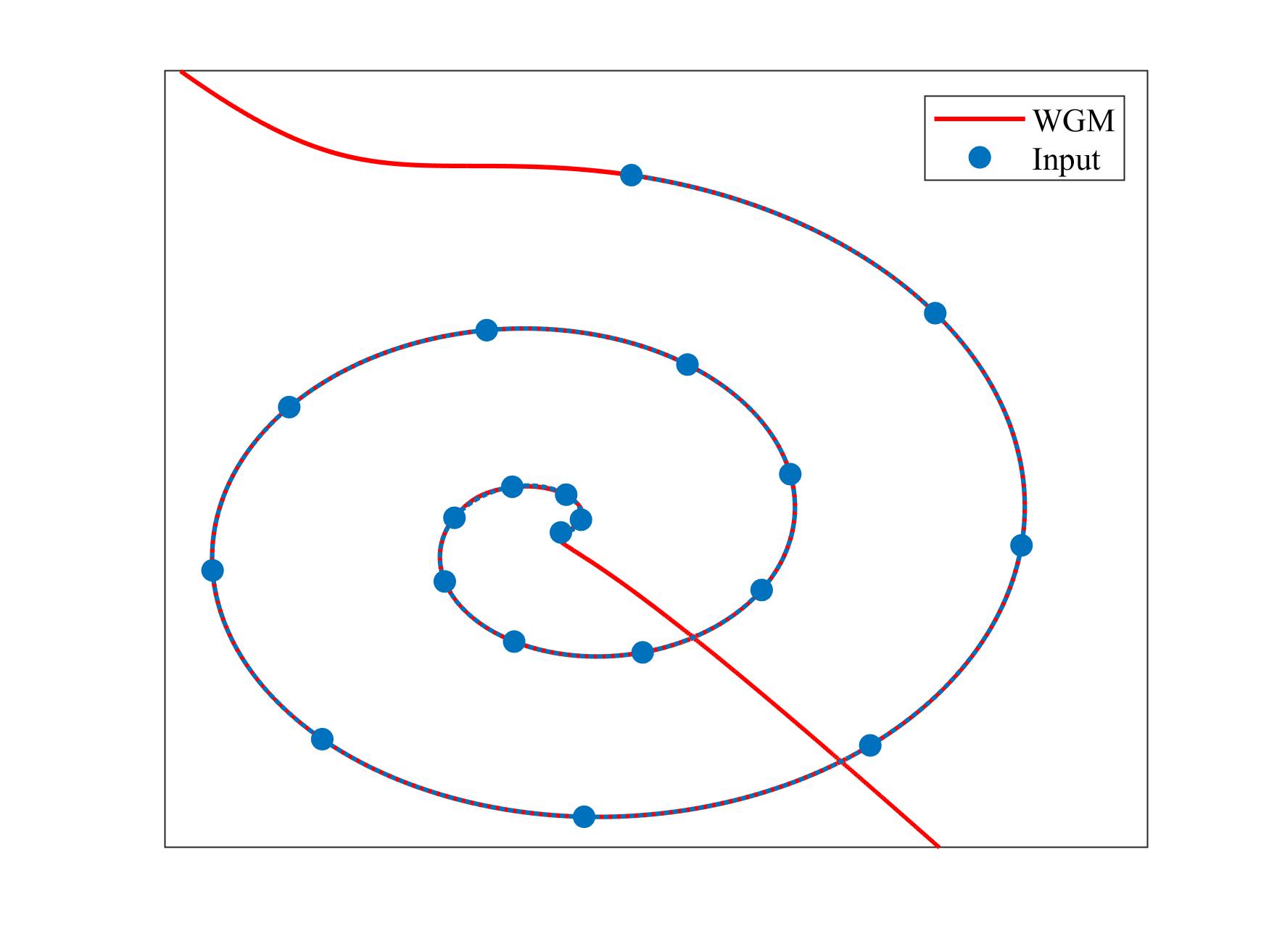}
	\end{minipage}}
	\\
	\subfigure[DM ($n=3$)]{\begin{minipage}[c]{0.19\textwidth}
		\centering
		\includegraphics[width=\textwidth]{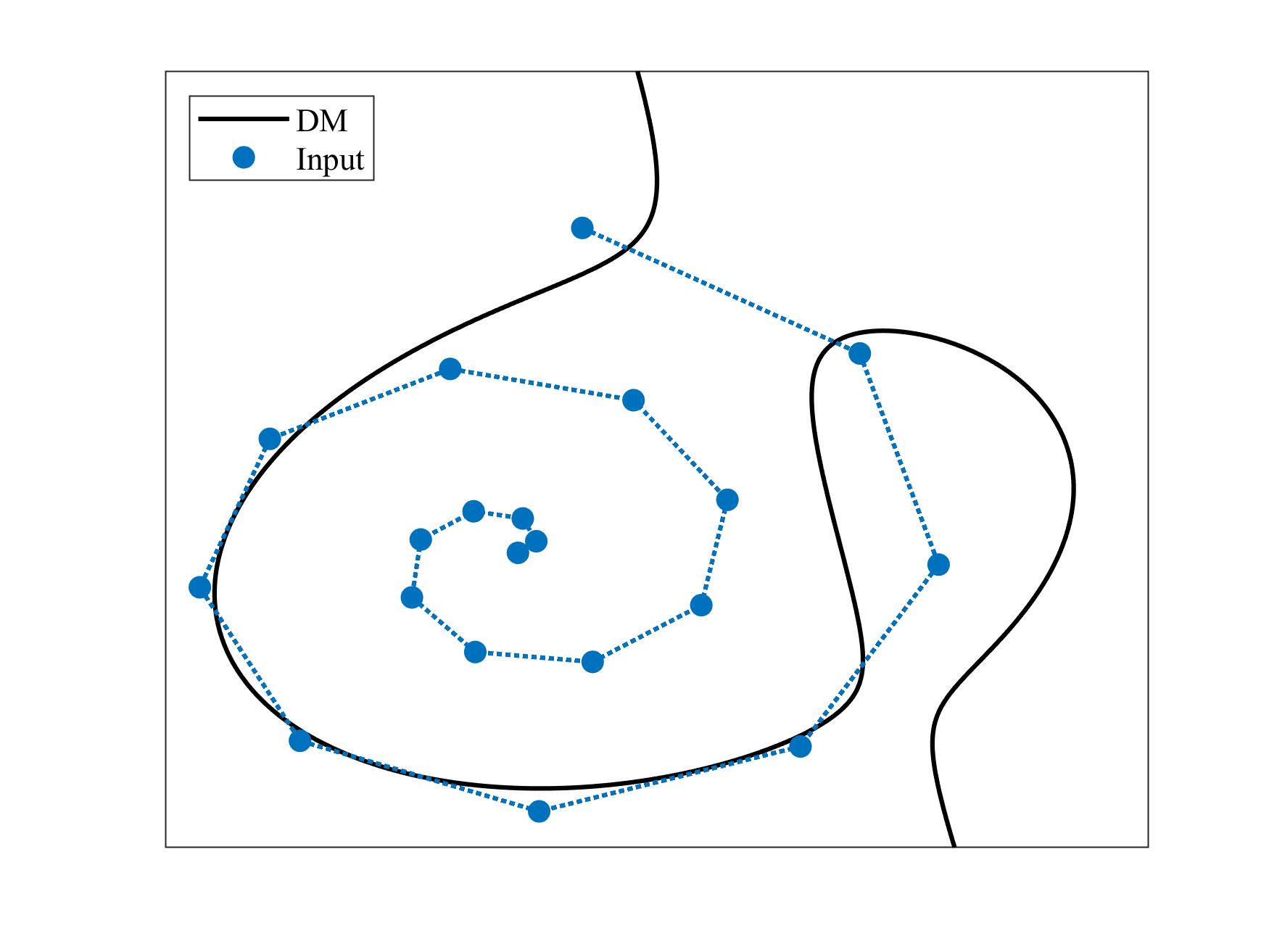}
	\end{minipage}}
	\subfigure[DM ($n=4$)]{\begin{minipage}[c]{0.19\textwidth}
		\centering
		\includegraphics[width=\textwidth]{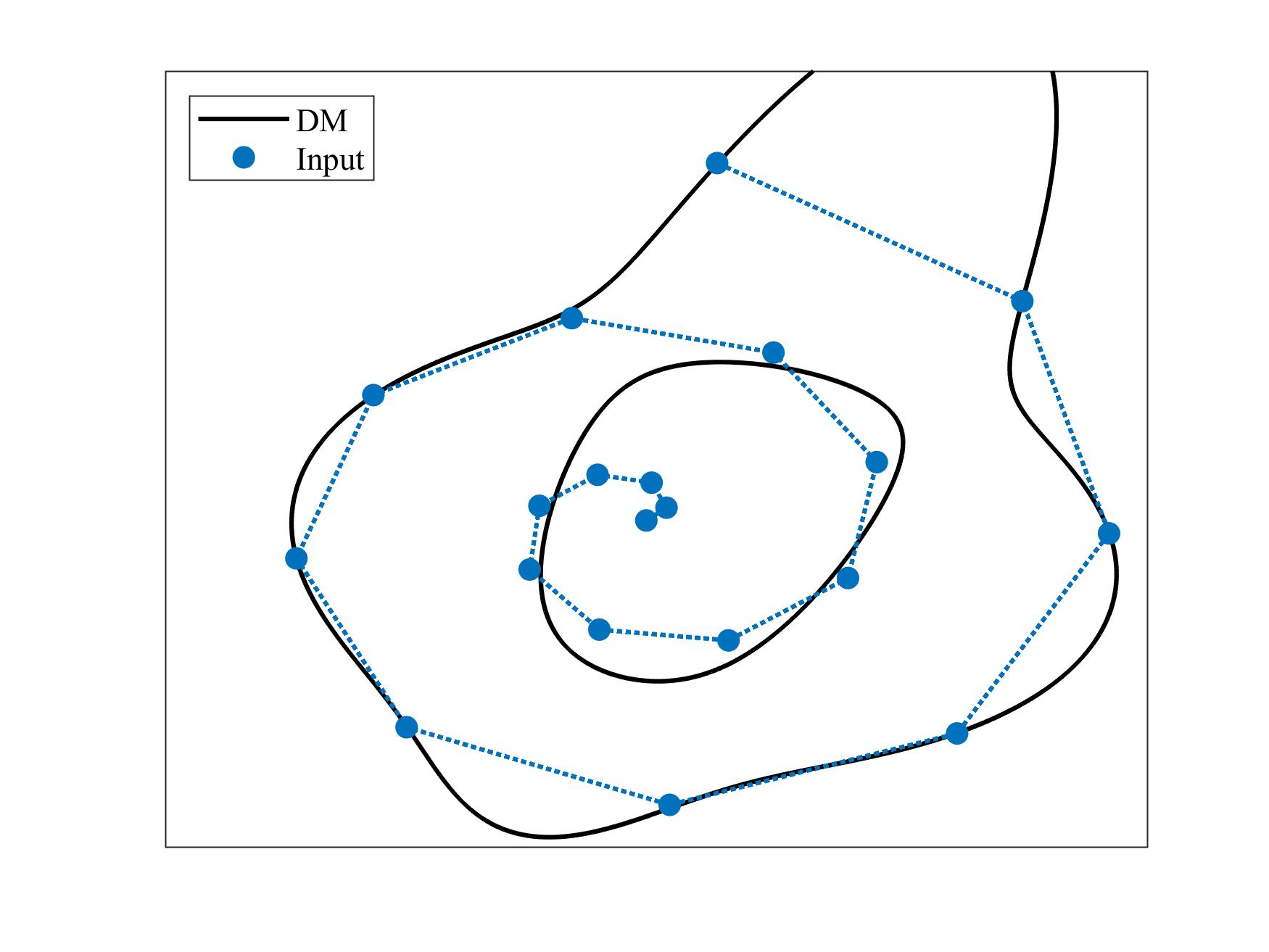}
	\end{minipage}}
	\subfigure[DM ($n=5$)]{\begin{minipage}[c]{0.19\textwidth}
		\centering
		\includegraphics[width=\textwidth]{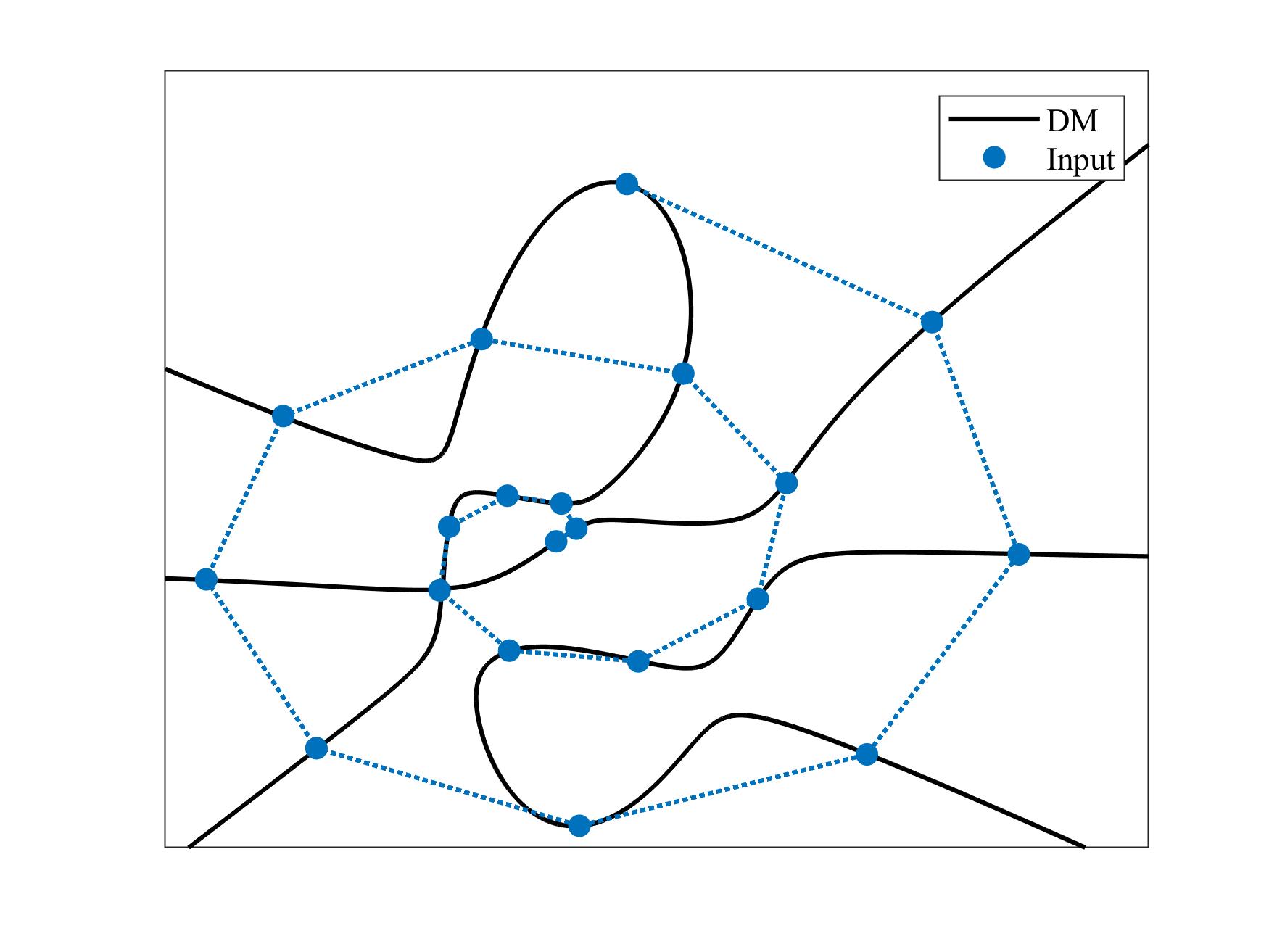}
	\end{minipage}}
	\subfigure[DM ($n=6$)]{\begin{minipage}[c]{0.19\textwidth}
		\centering
		\includegraphics[width=\textwidth]{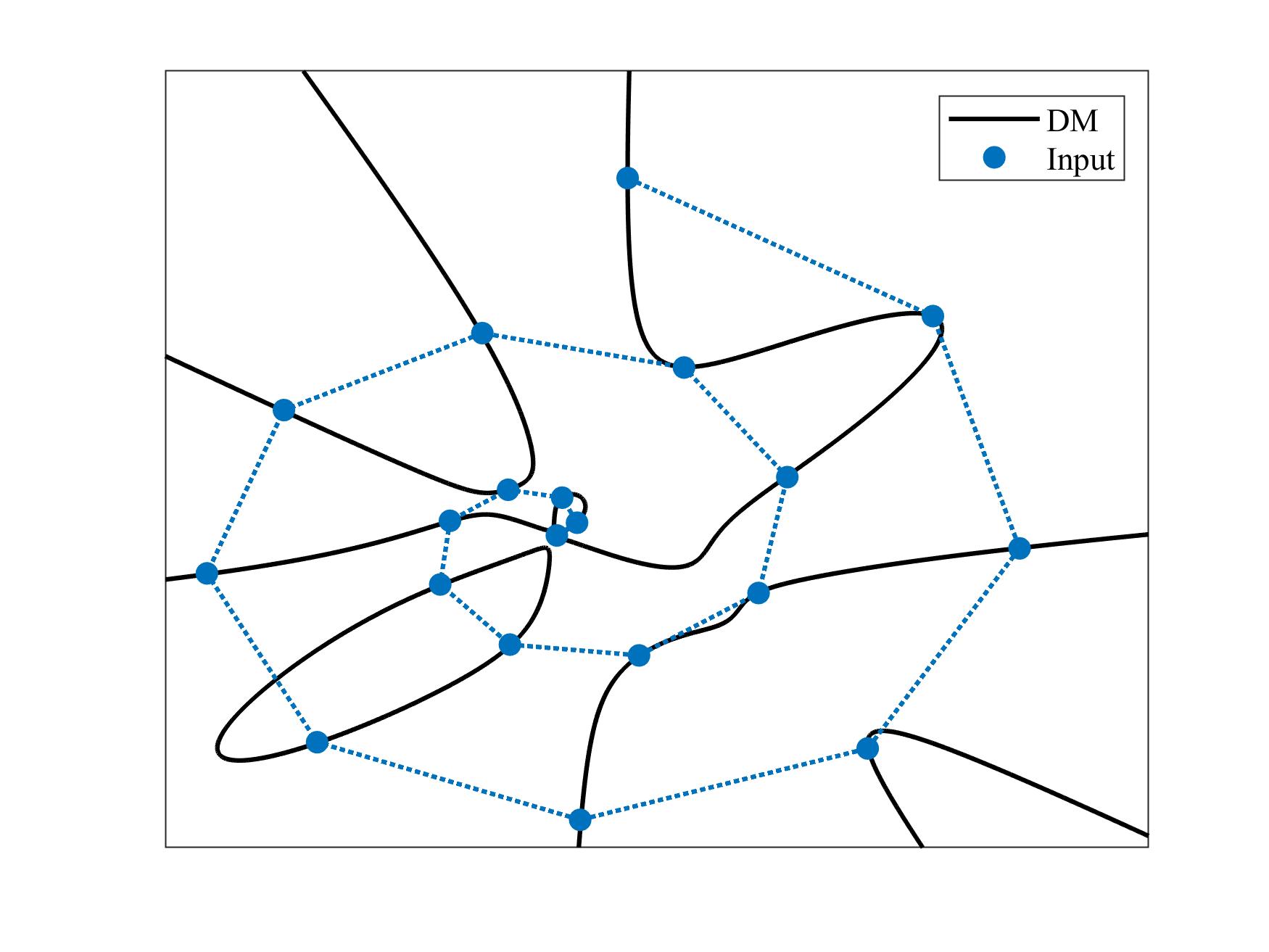}
	\end{minipage}}
	\subfigure[DM ($n=7$)]{\begin{minipage}[c]{0.19\textwidth}
		\centering
		\includegraphics[width=\textwidth]{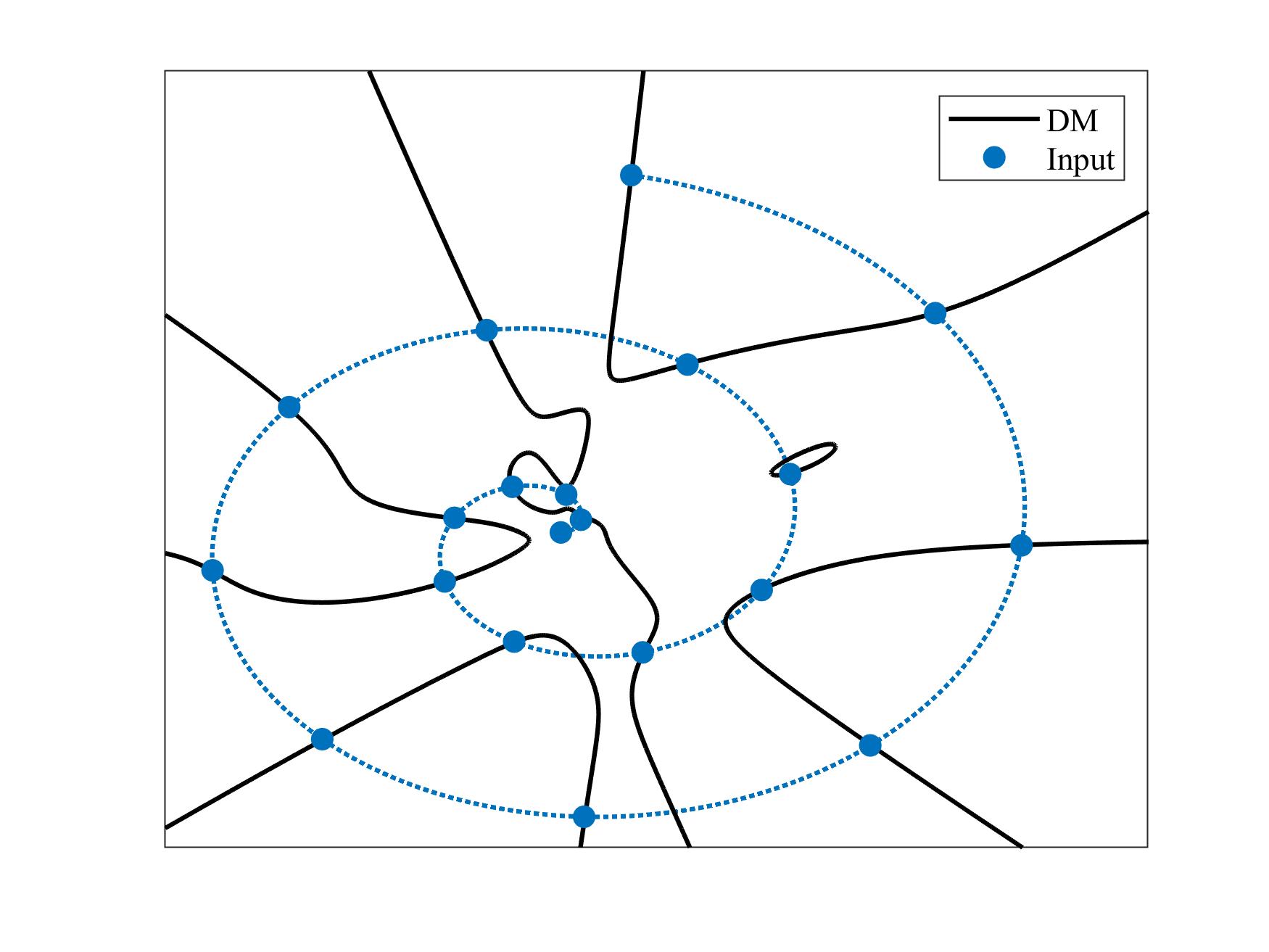}
	\end{minipage}}
	\caption{Adaptive implicitization of $C_4(t)$.
	The blue dash line in (a)-(j) is the input curve, 
	the red line in (a)-(c), (g), and (h) is the output curve by our method, 
	and the black line in (d)-(f), (i) and (j) is the output curve by Dokken's method. 
	From left to right: the implicit degree $n=3,4,5$.}
	\label{ex4_output}
\end{sidewaysfigure}
\begin{figure}
    \centering
    \includegraphics[width=0.5\textwidth]{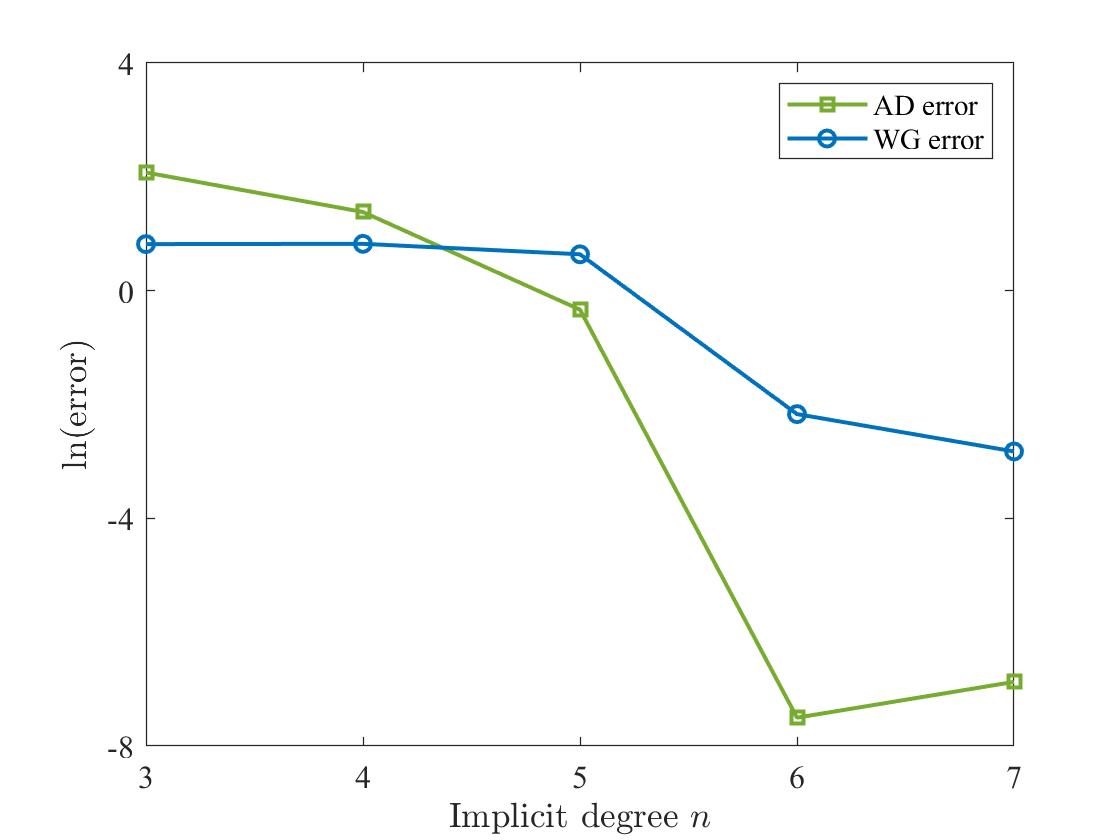}
    \caption{Statistics of our method on changes of the AD and WG error for $C_4(t)$,
    as the implicit degree $n$ increases.}
    \label{error4}
\end{figure} 

Finally, Table \ref{tab2} shows the comparison of running time performance of WGM, DM 
and the method in \cite{interian2017curve}.
\begin{table}[h!]
  \begin{center}
    \caption{Performance of WGM on $C_3(t), C_4(t)$ (all timings are measured in seconds).}
    \label{tab2}
    \begin{tabular}{cccccc} 
      \hline
      \hline
Input	& AD Error	& WG Error 
& WGM & DM & Method in \cite{interian2017curve}\\
 	&  	&   
& Time  & Time  & Time \\
      \hline
$C_3(t)$		& $2.011e-13$			& $3.558e-13$  & $0.0039$  & $0.0027$ & $59.49$\\
$C_4(t)$		& $1.114e-3$			& $6.721e-2$  & $0.0051$  & $0.0039$ & $166.58$\\
          \hline
    \end{tabular}
  \end{center}
\end{table}
\section{Conclusion}\label{conclusion}
In this paper, we proposed a novel approach for 
adaptive implicitization of parametric curves based on the so-called weak geometric constraint, 
named WGM. WGM solves the implicitization problem with regularization 
terms naturally with very little extra computation effort. Thus, it not only avoids 
additional branches but also reduces the computational cost effectively. 
Several experiments presented demonstrate that WGM produces 
high-quality implicitization results. 
In future work, we plan to generalize the proposed method to the cases of parametric surfaces
and space parametirc curves.

\bibliographystyle{unsrt}  
\bibliography{references}

\end{document}